\newtheorem{thm}{Theorem}
\newtheorem{example}{Example}[section]
\newtheorem{remark}{Remark}
\newtheorem{defn}{Definition}[section]
\newtheorem{lemma}{Lemma}
\newtheorem{prop}{Proposition}
\newtheorem{cor}{Corollary}
\journal{submit to Discrete Mathematics}
\begin{document}

\begin{frontmatter}

\title{On the $\ell$-DLIPs of codes over finite commutative rings}


\author[sb]{Sanjit Bhowmick}\ead{sanjitbhowmich392@gmail.com}
\author[fta]{Alexandre Fotue Tabue}\ead{alexfotue@gmail.com}
\author[jp]{Joydeb Pal\corref{cor1}} \ead{joydeb.palfma@kiit.ac.in}\cortext[cor1]{Corresponding author.}

\address[sb]{Department of Mathematics, National Institute of Technology  Durgapur, West Bengal, India.}
\address[fta]{Department of Mathematics, HTTC Bertoua, University of Bertoua, Bertoua, Cameroon.}
\address[jp]{Department of Mathematics,  School of Applied Sciences, Kalinga Institute of Industrial Technology (KIIT), Deemed to be University, Odisha, India.}

\begin{abstract}
Generalizing the linear complementary duals, the linear
complementary pairs and the hull of codes, we introduce the concept of $\ell$-dimension linear intersection pairs ($\ell$-DLIPs) of codes over a finite commutative ring $(R)$, for some positive integer $\ell$. In this paper, we study $\ell$-DLIP of codes over $R$ in a very general setting by a uniform method. Besides, we provide a necessary and sufficient condition for the existence of a non-free (or free) $\ell$-DLIP of codes over a finite commutative Frobenius ring. In addition, we obtain a generator set
of the intersection of two constacyclic codes over a finite chain ring, which helps us to get an important characterization of $\ell$-DLIP of constacyclic codes. Finally, the $\ell$-DLIP of constacyclic codes over a finite chain ring are used to construct new entanglement-assisted quantum error correcting (EAQEC) codes.
\end{abstract}

\begin{keyword} Finite Frobenius rings, dimension of linear codes, constacyclic codes, $\ell$-LIPs of codes, EAQEC codes.

\emph{AMS Subject Classification 2010:} 51E22; 94B05.
\end{keyword}

\end{frontmatter}

\section{Introduction}

$\ell$-linear intersection pairs ($\ell$-LIPs) of codes over finite fields have been studied due to their wide applications in cryptography. A pair of linear codes over finite fields is called an $\ell$-LIP if their intersection has dimension $\ell$. This idea was introduced by Guenda et al. \cite{Guenda2019}. The authors in \cite{Guenda2019} extended the concept of the linear complementary dual (LCD) codes and the linear complementary pair (LCP) of codes over finite fields (details of LCD codes, LCP of codes and their applications can be found in \cite{Bhasin2015, Bringer2014, Carlet2018, Massey1992}). In the same paper, Guenda et al. have shown that good $\ell$-LIP of codes over finite fields
exist and provided an application of linear $\ell$-intersection pair of codes by constructing entanglement-assisted quantum error correcting (EAQEC) codes.



In \cite{LH20}, Liu and Hu defined a pair of codes to be an $\ell$-LIP of codes over finite chain rings if $\ell$ is the rank of their intersection. The notions of the rank of linear codes and the dimension of linear codes over finite fields coincide. Dougherty and Liu defined the rank of linear codes over finite rings in \cite{DL09}. But the rank of linear codes over finite rings is not a tool that allows us to determine the cardinality of these linear codes. These inspiring works highlighted the research gap in generalizing the dimension for linear codes over finite local rings.

In this paper, we introduce the dimension of linear codes over a finite local ring with the residue field $\mathbb{F}_q$ and in using the Chinese Remainder Theorem, we extend the dimension to linear codes over a finite commutative Frobenius ring. We generalize the $\ell$-LIP of codes over finite fields to $\ell$-dimension LIP ($\ell$-DLIP) of codes over finite local rings and study the characterizations of $\ell$-DLIP of codes over a finite commutative Frobenius ring.

The paper is organized as follows. In Section 2, we recall
background materials of linear codes over a finite commutative ring. In Section 3, we define $\ell$-DLIP of codes over a finite commutative ring and elaborate the property of $\ell$-DLIP of codes to obtain some salient characterizations for $\ell$-DLIP of codes over finite commutative Frobenius rings. Section 4 studies
$\ell$-DLIP of constacyclic codes over a finite chain ring.
Finally, we present some results on $\ell$-DLIP of codes under the Gray map and an application of $\ell$-DLIP of constacyclic codes over finite chain rings in constructing new entanglement-assisted quantum error correcting codes.

\section{Linear codes over a finite commutative ring}\label{sec:2}

This section uses $R$ as a finite commutative ring with
multiplicative unity $1$ ($0\neq 1$). The ring $R$ is a
principal ideal ring (PIR) if a single element generates each of its ideals. The ring $R$ is local if it has a unique
maximal ideal. In \cite[Theorem 2.2]{Nec08}, Nechaev demonstrated that the cardinality of any finite local ring is the power of its residue field.

\begin{lemma} \cite[Theorem 2.2]{Nec08} Let $R$ be a finite local ring whose residue field is $\mathbb{F}_q$ and $\textgoth{m}$ be a maximal ideal  of $R$. Then there is a positive integer $t$ (so-called the nilpotency index of $\textgoth{m}$) such that $R$ contains a strictly descending chain of the ideals $$R \supsetneq \textgoth{m} \supsetneq \textgoth{m}^2 \supsetneq \cdots \supsetneq \textgoth{m}^{t-1} \supsetneq \textgoth{m}^t=\{0\},$$ which satisfies the conditions $ t\leq \omega, |R|=q^\omega$ and $|\textgoth{m}|=q^{\omega-1}.$
\end{lemma}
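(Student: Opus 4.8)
The plan is to exploit the finiteness of $R$ to force the descending chain of powers of $\textgoth{m}$ to terminate, then invoke Nakayama's lemma both to pin the terminal point at $\{0\}$ and to guarantee strict containment at every step, and finally to count cardinalities by recognizing each successive quotient as a vector space over the residue field $\mathbb{F}_q$.

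First I would examine the chain $R \supseteq \textgoth{m} \supseteq \textgoth{m}^2 \supseteq \cdots$. Since $R$ is finite it possesses only finitely many ideals, so this chain must stabilize; let $t$ be the least index with $\textgoth{m}^t = \textgoth{m}^{t+1}$. At that index one has $\textgoth{m}^t = \textgoth{m}\cdot\textgoth{m}^t$, and because $\textgoth{m}^t$ is a finitely generated module over the local ring $R$ whose Jacobson radical is precisely its unique maximal ideal $\textgoth{m}$, Nakayama's lemma gives $\textgoth{m}^t = \{0\}$; this is the nilpotency index. Minimality of $t$ then yields strictness: were $\textgoth{m}^i = \textgoth{m}^{i+1}$ for some $i < t$, the same application of Nakayama would force $\textgoth{m}^i = \{0\}$, contradicting the choice of $t$. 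Thus the chain $R \supsetneq \textgoth{m} \supsetneq \cdots \supsetneq \textgoth{m}^{t-1} \supsetneq \textgoth{m}^t = \{0\}$ is strictly descending.

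Next I would perform the cardinality count. Each quotient $\textgoth{m}^i/\textgoth{m}^{i+1}$ is annihilated by $\textgoth{m}$, hence is naturally a module over $R/\textgoth{m} = \mathbb{F}_q$, that is, a finite $\mathbb{F}_q$-vector space; writing $d_i := \dim_{\mathbb{F}_q}\!\big(\textgoth{m}^i/\textgoth{m}^{i+1}\big)$ gives $|\textgoth{m}^i/\textgoth{m}^{i+1}| = q^{d_i}$, and strictness of the chain forces $d_i \geq 1$ for $0 \le i \le t-1$. Telescoping the orders along the chain (with the convention $\textgoth{m}^0 = R$) yields $|R| = \prod_{i=0}^{t-1} q^{d_i} = q^{\omega}$ where $\omega = \sum_{i=0}^{t-1} d_i$; since there are $t$ summands each at least $1$, we obtain $t \leq \omega$. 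Finally, the top quotient is $R/\textgoth{m} = \mathbb{F}_q$, so $d_0 = 1$ and $|\textgoth{m}| = |R|/|R/\textgoth{m}| = q^{\omega}/q = q^{\omega-1}$.

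The only substantive input is Nakayama's lemma, which simultaneously delivers termination at $\{0\}$ and strictness of every inclusion; the remaining bookkeeping is routine once the successive quotients are identified as $\mathbb{F}_q$-vector spaces. I anticipate no real obstacle beyond verifying that $\textgoth{m}$ is the Jacobson radical of $R$, which is immediate since a local ring has a unique maximal ideal, so that the hypotheses of Nakayama's lemma are met.
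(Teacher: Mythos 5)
The paper does not prove this lemma at all---it is imported verbatim from Nechaev (cited as \cite[Theorem 2.2]{Nec08})---so there is no in-paper argument to compare against. Your proposal is a correct and essentially standard self-contained proof: finiteness forces stabilization of the chain of powers of $\textgoth{m}$, Nakayama (applicable since $\textgoth{m}=\texttt{J}(R)$ in a local ring and every ideal of a finite ring is finitely generated) pins the stable value at $\{0\}$, and the telescoping count through the $\mathbb{F}_q$-vector spaces $\textgoth{m}^i/\textgoth{m}^{i+1}$ delivers $|R|=q^\omega$, $|\textgoth{m}|=q^{\omega-1}$ (using $\dim_{\mathbb{F}_q}(R/\textgoth{m})=1$), and $t\leq\omega$ from $d_i\geq 1$. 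One small redundancy: once $t$ is chosen as the \emph{least} index with $\textgoth{m}^t=\textgoth{m}^{t+1}$, strictness of $\textgoth{m}^i\supsetneq\textgoth{m}^{i+1}$ for $i<t$ is immediate from that minimality; the second appeal to Nakayama there is harmless but unnecessary.
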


For all $1\leq i <t$, the quotients $\textgoth{m}^i/\textgoth{m}^{i+1}$ are spaces over the field $\mathbb{F}_q$, and $\mu_i:=\dim_{\mathbb{F}_q}(\textgoth{m}^i/\textgoth{m}^{i+1})$. The parameters $(\mu_0, \cdots, \mu_{t-1})$ are called the \emph{Loewy invariants} of a local finite ring $R$. We have
$$|\textgoth{m}^i|=q^{\mu_i+\cdots+\mu_{t-1}}$$ for any $i
\in\{1,\cdots, t - 1\}$. Denote the maximal ideals of $R$ as $\textgoth{m}_1,\cdots,\textgoth{m}_u$. For any $1\leq i\leq u$, the stationary index $s_i$ of $\textgoth{m}_i$ is defined as $s_i:=\min\{k\in\mathbb{N}\;:\;\textgoth{m}_i^{k}=\textgoth{m}_i^{k+1}\}$. Clearly, $R_j:=R/\textgoth{m}_j^{s_j}$ is a finite local ring with maximal ideal $\textgoth{m}_j/\textgoth{m}_j^{s_j}$. Note that $s_i$ is the nilpotency index of $\textgoth{m}_j/\textgoth{m}_j^{s_j}$. Denote $\mathbb{F}_{q_j}$, the residue field of $R_j$. Then we have the ring epimorphisms
\begin{align}
\begin{array}{cccc}
 \Phi_j: & R & \rightarrow & R_j \\
    & a & \mapsto & a+\textgoth{m}_j
\end{array}
\end{align} and $\texttt{Ker}(\Phi_j)=\textgoth{m}_j^{s_j},$  for $1\leq j\leq u$. The ideals $\textgoth{m}_1,\cdots,\textgoth{m}_u$ are coprime and $\bigcap_{j=1}^u \textgoth{m}_j^{s_j} = \{0_R\}$. Using Chinese
remainder theorem (see \cite[p.224]{McD74}), the ring epimorphisms $\Phi_j$ ($1\leq j\leq u$) induce the following ring isomorphism as follows.

\begin{align}\label{Phi1}
\begin{array}{cccc}
 \Phi: & R & \rightarrow & R_1\times\cdots\times R_u \\
    & a & \mapsto & \left(\Phi_1(a),\cdots,\Phi_u(a)\right).
\end{array}
\end{align} The inverse of the map (\ref{Phi1}) is denoted as $\texttt{CRT}$, and $R$ is called the \emph{Chinese product of rings} $\{R_j\}_{j=1}^u.$

Recall that the Jacobson radical of $R$, denoted as $\texttt{J}(R)$, is the intersection of all maximal ideals of $R$. Also, the socle of $R$, denoted as $\texttt{Soc}(R)$, is the sum of the minimal $R$-submodules of $R$. The ring $R$ is Frobenius if the $R$-module $R$ is injective. Alternatively, $R$ is Frobenius if $R/\texttt{J}(R)\simeq \texttt{soc}(R)$ (as $R$-modules). Any finite PIR (so-called Chinese product of finite chain rings) is a Frobenius ring.

\begin{example} \label{ex2.1}
Let $\Re_k := \mathbb{F}_2[u_1,u_2,\cdots,u_k]$ be a finite commutative ring such that $$\{u_1,u_2,\cdots,u_k\}\cap \mathbb{F}_2=\emptyset\text{ and }u_1^2=u_2^2=\cdots=u_k^2=0,$$ where $k$ is a positive integer. For $\emptyset\neq A\subseteq\{1, 2, \cdots, k\}$, we set $u_A:=\prod\limits_{i\in A}u_i$ and $u_\emptyset:=1$. The ring $\Re$ is an $\mathbb{F}_2$-vector space with basis $\left\{u_A\;:\; A\subseteq\{1, 2, \cdots, k\}\right\}$. The ideals of $\Re_k$ are subspaces of $\Re_k$ whose basis are subsets of $\left\{u_A:\; \emptyset\neq A\subseteq\{1, 2, \cdots, k\}\right\}$. Thus, $\texttt{J}(\Re_k)=\langle u_1,u_2,\cdots,u_k\rangle$ and $\texttt{Soc}(\Re_k)=\left\langle\prod\limits_{i=1}^k u_i\right\rangle$. The Loewy invariants of $\Re_k$ is $(\mu_0, \mu_1, \cdots, \mu_k)$, where $\mu_i=\dim_{\mathbb{F}_2}\left(\texttt{J}(\Re_k)^i/\texttt{J}(\Re_k)^{i+1}\right)=
\left(\begin{array}{c}
  k \\
  i
\end{array}
\right)$ and $\Re_k/\texttt{J}(\Re_k) \simeq \texttt{Soc}(\Re_k)$ (as $\Re_k$-modules). Therefore, $\Re_k$ is a finite commutative local non-chain Frobenius ring with $\omega=2^k$.
\end{example}

\begin{lemma} \cite[p.224]{McD74} Any commutative Frobenius ring can be written as the Chinese product of commutative local Frobenius rings.
\end{lemma}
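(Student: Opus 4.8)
The plan is to build directly on the Chinese remainder decomposition established above: by the isomorphism $\Phi$ of~(\ref{Phi1}), the finite commutative Frobenius ring $R$ is already written as the Chinese product $R_1\times\cdots\times R_u$, where each $R_j=R/\textgoth{m}_j^{s_j}$ is a finite commutative local ring. Since every $R_j$ is a quotient of the commutative ring $R$ it is automatically commutative, and its locality was noted above; thus the only point left to establish is that each local factor $R_j$ inherits the Frobenius property from $R$. To work with the individual factors I would use the complete system of central orthogonal idempotents $e_1,\dots,e_u$ attached to $\Phi$, for which $e_1+\cdots+e_u=1$, $e_ie_j=0$ for $i\neq j$ and $R_j\simeq e_jR$; multiplication by $e_j$ is an exact functor sending an $R$-module $M$ to the $R_j$-module $e_jM$, with $M=\bigoplus_{j=1}^u e_jM$.

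For the criterion I would use the module-theoretic form of the Frobenius condition recalled above, namely $R/\texttt{J}(R)\simeq\texttt{Soc}(R)$ as $R$-modules. The first step is to record how the two invariants split across the product. The maximal ideals of $R_1\times\cdots\times R_u$ are precisely the sets $R_1\times\cdots\times\texttt{J}(R_j)\times\cdots\times R_u$, with $\texttt{J}(R_j)$ the unique maximal ideal of the local ring $R_j$, so that
\[
\texttt{J}(R)\simeq\texttt{J}(R_1)\times\cdots\times\texttt{J}(R_u)
\quad\text{and}\quad
R/\texttt{J}(R)\simeq\prod_{j=1}^{u}\bigl(R_j/\texttt{J}(R_j)\bigr).
\]
Dually, any minimal $R$-submodule of $R$ is simple, hence is annihilated by all but one idempotent and lies in a single factor; summing over the minimal submodules gives
\[
\texttt{Soc}(R)\simeq\texttt{Soc}(R_1)\times\cdots\times\texttt{Soc}(R_u),
\]
where $\texttt{Soc}(R_j)$ denotes the socle of $R_j$ as a module over itself. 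Both identities also drop out at once from applying $e_j(-)$, since $e_j\texttt{J}(R)=\texttt{J}(R_j)$ and $e_j\texttt{Soc}(R)=\texttt{Soc}(R_j)$.

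Finally I would transport the global isomorphism down to each factor: applying the exact functor $e_j(-)$ to an $R$-isomorphism $R/\texttt{J}(R)\xrightarrow{\sim}\texttt{Soc}(R)$ produces an $R_j$-isomorphism $R_j/\texttt{J}(R_j)\xrightarrow{\sim}\texttt{Soc}(R_j)$ for every $j$, which is exactly the assertion that each $R_j$ is Frobenius. This yields the desired expression of $R$ as a Chinese product of local Frobenius rings. I expect the only delicate point to be the bookkeeping that module isomorphisms over a product ring amount to compatible families of isomorphisms over the factors, i.e. that $e_j(-)$ and $\bigoplus_j$ set up mutually inverse equivalences between $R$-modules and tuples of $R_j$-modules; once this is granted the splitting of $\texttt{J}$ and $\texttt{Soc}$ and the final conclusion are routine. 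An equivalent route would replace the socle criterion by self-injectivity, using that a finite product of rings is self-injective exactly when each factor is, but the socle computation is the more transparent of the two.
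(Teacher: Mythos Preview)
The paper does not prove this lemma at all; it is merely stated with a citation to \cite[p.224]{McD74} and no argument is supplied. Your proposal therefore goes beyond what the paper does and provides a correct self-contained justification: the CRT decomposition $R\simeq R_1\times\cdots\times R_u$ into finite commutative local rings is already established in~(\ref{Phi1}), and your identifications $\texttt{J}(R)\simeq\prod_j\texttt{J}(R_j)$ and $\texttt{Soc}(R)\simeq\prod_j\texttt{Soc}(R_j)$, together with the exact idempotent functors $e_j(-)$, cleanly transport the Frobenius criterion $R/\texttt{J}(R)\simeq\texttt{Soc}(R)$ to each factor $R_j$. The only point worth tightening is the sentence asserting that an $R$-isomorphism $R/\texttt{J}(R)\to\texttt{Soc}(R)$ yields, after applying $e_j(-)$, an $R_j$-isomorphism: you should make explicit that any $R$-linear map $\varphi$ satisfies $\varphi(e_jx)=e_j\varphi(x)$, so $\varphi$ restricts to a bijection $e_jM\to e_jN$ whenever it is a bijection $M\to N$. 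With that one line added, the argument is complete; there is simply nothing in the paper to compare it against.
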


We use this decomposition of rings to understand codes
defined over finite commutative Frobenius rings. Denote the zero vector in $R^n$ as $\textbf{0}$. A linear code $C$ of length $n$ over a finite ring $R$ is an $R$-submodule of $R^n.$ For any linear code $C$ over $R$ of length $n$, the minimal generating sets for $C$ have the same cardinality, called the rank of $C$ and denoted as $\texttt{Rk}_R(C)$. Thus, $$\texttt{Rk}_R(C):= \texttt{min}\left\{i\in\mathbb{N}:\; \text{there exists a monomorphism $C \hookrightarrow R^i$ as $R$-modules}\right\}.$$ Let $\{\textbf{v}_1, \textbf{v}_2, \cdots, \textbf{v}_s\}$ be a set of non-zero vectors in $R^n$. Denote $\langle\,\textbf{v}_1, \textbf{v}_2, \cdots,
\textbf{v}_s\,\rangle$ as the set of all linear combinations of vectors $\textbf{v}_1, \textbf{v}_2, \cdots, \textbf{v}_s$. In \cite{DL09}, Dougherty and Liu studied several types of independence of vectors. Recall that the set $\{\textbf{v}_1, \textbf{v}_2, \cdots, \textbf{v}_s\}$ is
\begin{itemize}
\item  \emph{$R$-modular independent} if for any $(\alpha_1,\cdots, \alpha_s)$ in $R^s$, \;$\sum\limits_{i=1}^s\alpha_i\textbf{v}_i=\textbf{0}$ implies $\alpha_i\in\textgoth{m}$, for any $1\leq i\leq s$.
\item \emph{$R$-independent} if for some $(\alpha_1,\cdots, \alpha_s)$ in $R^s$, \;$\sum\limits_{i=1}^s\alpha_i\textbf{v}_i=\textbf{0}$ implies $\alpha_i\textbf{v}_i=\textbf{0}$, for some $1\leq i\leq s$.
\end{itemize}
The $R$-independence implies the $R$-modular independence when $R$ is local. Thus, a generator matrix for a linear code $C$ is a matrix whose rows form a basis for $C$. The
authors in \cite[Theorem 4.7]{DL09} proved that those bases exist for any code over a principal ideal ring and that the cardinality of any basis is equal to the rank $\texttt{Rk}_R(C)$.

\begin{example} We consider the ring cited in Example \ref{ex2.1}. Let $s$ be a nonnegative integer such that $s\leq k$ and $\{i_1,\cdots, i_s\}$ be a subset of $\{1, 2, \cdots, k\}$. The set of vectors $\{u_{i_1}, u_{i_2},\cdots, u_{i_s}  \}$ are $\Re_k$-modular independent, but they are not $\Re_k$-independent. Thus, the ideal $\langle u_{i_1}, u_{i_2},\cdots, u_{i_s} \rangle$ of $\Re_k$ is a linear code over $\Re_k$ of length $1$ with generator matrix 
$\left(
\begin{array}{c}
  u_{i_1} \\
  \vdots\\
  u_{i_s}
\end{array}
\right).$ Moreover, $|\langle u_{i_1}, u_{i_2}, \cdots,  u_{i_s} \rangle|=2^{\left(2^k-2^{k-s}\right)}$.
\end{example}

Let $C_j$ be an $n$-length code over $R_j$. We extend the map $\Phi$ coordinatewise to $R^n$ as
\begin{align}
\begin{array}{cccc}
 \Phi: & R^n & \rightarrow & (R_1)^n\times\cdots\times(R_u)^n \\
    & \textbf{a} & \mapsto &
    \left(\Phi_1(\textbf{a}),\cdots,\Phi_u(\textbf{a})\right),
\end{array}
\end{align}
where
\begin{align}\label{Phi}
\begin{array}{cccc}
 \Phi_j: & R^n & \rightarrow &  (R_j)^n \\  & (a_1,a_2,\cdots,a_n) & \mapsto &
 \left(\Phi_j(a_1),\Phi_j(a_2),\cdots,\Phi_j(a_n)\right).
\end{array}
\end{align}
$C$ is a linear code over $R$ if and only if for all
$1\leq j\leq u$,  $\Phi_j(C)$ is a linear code over $R_j$. Since $\Phi$ is bijective, $C = \texttt{CRT}(C_1,C_2, \cdots , C_u) := \Phi^{-1}(C_1 \times C_2 \times \cdots \times C_u),$ where $\Phi_j(C)=C_j$ for $1\leq j\leq u.$ In this case, $C$ is called the Chinese product of codes $\{C_j\}_{j=1}^u.$

\begin{lemma}\cite[Theorem 2.4]{DL09}\label{lem00} Let $C$ be the Chinese product of linear codes $\{C_j\}_{j=1}^n,$ where $C_j$ is a linear code over $R_j$, and $\mathbb{F}_{q_j}$ is the residue field of $R_j$. Then
\begin{enumerate}
    \item $|C| = \prod_{i=1}^u|C_j|;$
    \item $\texttt{Rk}_R(C)=\texttt{max}\left\{\texttt{Rk}_{R_j}(C_j):\;1\leq j\leq u\right\};$
    \item $C$ is a free code if and only if each $C_j$ is free with the same rank  $\texttt{Rk}_R(C).$
\end{enumerate}
\end{lemma}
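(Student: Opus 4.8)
The plan is to exploit throughout that $\Phi\colon R^n\to R_1^n\times\cdots\times R_u^n$ is a bijection and that each projection $\Phi_j$ is a surjective $R$-module homomorphism with $\Phi_j(C)=C_j$. For part (1), since $C=\Phi^{-1}(C_1\times\cdots\times C_u)$ and $\Phi$ is a bijection, I would immediately conclude $|C|=|C_1\times\cdots\times C_u|=\prod_{j=1}^u|C_j|$. This is the routine step.

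For part (2), write $r_j:=\texttt{Rk}_{R_j}(C_j)$ and $r:=\texttt{max}\{r_j:1\le j\le u\}$. The inequality $\texttt{Rk}_R(C)\ge r$ is the easy direction: if $\{\mathbf{h}_1,\dots,\mathbf{h}_m\}$ generates $C$ with $m=\texttt{Rk}_R(C)$, then applying the surjective homomorphism $\Phi_j$ shows that $\{\Phi_j(\mathbf{h}_1),\dots,\Phi_j(\mathbf{h}_m)\}$ generates $C_j$, so $r_j\le m$ for every $j$, whence $r\le m$. For the reverse inequality $\texttt{Rk}_R(C)\le r$, I would start from bases of the $C_j$, pad each to the common size $r$ by adjoining zero vectors, and then glue them coordinatewise through $\texttt{CRT}$. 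Concretely, letting $e_1,\dots,e_u\in R$ be the orthogonal idempotents with $\Phi_k(e_j)=\delta_{jk}$ and setting $\mathbf{g}_i:=\texttt{CRT}(\mathbf{g}_{1,i},\dots,\mathbf{g}_{u,i})$, I would verify that an arbitrary $\mathbf{c}=\texttt{CRT}(\mathbf{c}_1,\dots,\mathbf{c}_u)\in C$ with $\mathbf{c}_j=\sum_i\alpha_{j,i}\mathbf{g}_{j,i}$ is recovered as $\sum_i\beta_i\mathbf{g}_i$ for $\beta_i:=\sum_j e_j\hat{\alpha}_{j,i}$, where $\hat{\alpha}_{j,i}\in R$ is any lift of $\alpha_{j,i}$. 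Indeed $\Phi_k(\beta_i)=\alpha_{k,i}$, so the $k$-th component of $\sum_i\beta_i\mathbf{g}_i$ collapses exactly to $\sum_i\alpha_{k,i}\mathbf{g}_{k,i}=\mathbf{c}_k$; since this holds for all $k$ and $\Phi$ is injective, $\sum_i\beta_i\mathbf{g}_i=\mathbf{c}$. This shows $\{\mathbf{g}_1,\dots,\mathbf{g}_r\}$ generates $C$, giving $\texttt{Rk}_R(C)\le r$.

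For part (3), I would argue purely by cardinalities, using parts (1) and (2) together with the standard local-ring fact that a code of rank $r_j$ over $R_j$ is a quotient of $R_j^{r_j}$, hence satisfies $|C_j|\le|R_j|^{r_j}$ with equality precisely when $C_j$ is free. Set $k:=\texttt{Rk}_R(C)$. If $C$ is free then $|C|=|R|^k=\prod_j|R_j|^k$; combining this with $|C|=\prod_j|C_j|$ and the bounds $|C_j|\le|R_j|^{r_j}\le|R_j|^k$ forces $|C_j|=|R_j|^{r_j}=|R_j|^k$ for every $j$, so $r_j=k$ and each $C_j$ is free of rank $k$. Conversely, if every $C_j$ is free of rank $k$, then $r_j=k$ for all $j$, so part (2) gives $\texttt{Rk}_R(C)=k$ and $C$ admits a surjection $R^k\twoheadrightarrow C$; since $|R^k|=|R|^k=\prod_j|R_j|^k=\prod_j|C_j|=|C|$ by part (1), this surjection is a bijection and $C\cong R^k$ is free.

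I expect the main obstacle to be the upper bound in part (2): verifying that the $\texttt{CRT}$-glued generators actually span $C$ requires the orthogonal idempotents and a careful component-by-component computation, whereas every other step reduces to the bijectivity of $\Phi$ and elementary counting.
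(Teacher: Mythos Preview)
The paper does not supply a proof of this lemma at all: it is quoted verbatim as \cite[Theorem~2.4]{DL09} and used as a black box. There is therefore no in-paper argument to compare against; your proposal stands as an independent proof, and it is correct.

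A couple of minor remarks on presentation. In part~(2) you speak of ``bases'' of the $C_j$, but all you actually use---and all that is guaranteed in the general local (not necessarily chain) setting---is a minimal generating set of cardinality $r_j$; the padding-by-zeros and $\texttt{CRT}$-gluing argument goes through verbatim with that weaker input. In part~(3) your counting argument implicitly invokes Nakayama's lemma over the local ring $R_j$ to ensure that a free $R_j$-module has a well-defined rank coinciding with $\texttt{Rk}_{R_j}$; this is routine but worth a one-line mention. With those clarifications your proof is complete and self-contained, which is more than the paper offers.
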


We Denote $\dim(C_j):=\log_{q_j}(|C_j|)$ and $\texttt{Rank}_{q_j}(\mathrm{A}_j):=\log_{q_j}(|C_j|),$ where $\mathrm{A}_j$ is a matrix whose rows span $C_j$. By setting $Q:=\prod\limits_{j=1}^{u}q_j$ from Lemma \ref{lem00}, it follows
that
\begin{align}\label{Eq0}\log_Q(|C|)=\sum\limits_{j=1}^{u}\left(\frac{\ln(q_j)}{\ln(Q)}\right)\dim(C_j).
\end{align}
The positive integer $\log_Q(|C|)$ is called the dimension of $C$, denoted as $\dim(C)$. Now, we consider the standard inner product to the space $R^n$ as follows.
\begin{align}[\textbf{v}\,,\,\textbf{w}] = \sum_{j=1}^n v_jw_j,
\end{align} 
where $\textbf{v}:=(v_1,v_2,\cdots,v_n)$ and $\textbf{w}:=(w_1,w_2,\cdots,w_n)$ are vectors in $R^n.$  For a code $C,$ its dual code is defined as $C^\perp = \{\textbf{u}\in R^n:\; [\textbf{u}, \textbf{c}] = 0_R, \text{ for all } \;\textbf{c}\; \text{ in } C\}.$  A generator matrix for $C^\perp$
is called a parity-check matrix for $C$. It is well known that
for codes over Frobenius rings, $|C|\times|C^\perp| = |R|^n$ (see
\cite{Woo99} for the proof). Thus, for any linear code $C$ over $R$, we have
\begin{align}\label{eq*}
    \dim(C)+\dim(C^\perp)=n\omega.
\end{align}

\begin{example} We consider the ring cited in Example \ref{ex2.1}. Let $s$ be a nonnegative integer such that $s\leq k$ and $\{i_1,\cdots, i_s\}$ be a subset of $\{1, 2, \cdots, k\}$. Then in $\Re_k$, we have
$$\langle u_{i_1}, u_{i_2},\cdots, u_{i_s}
\rangle^\perp=\texttt{Ann}(\langle u_{i_1}, u_{i_2},\cdots,
u_{i_s} \rangle)=\left\langle\prod\limits_{j=1}^su_{i_j}\right\rangle.$$ Moreover, $\left|\left\langle\prod\limits_{j=1}^su_{i_j}\right\rangle\right|=2^{\left(2^{k-s}\right)}$.
\end{example}


\section{Characterization of $\ell$-DLIP of codes over a finite local Frobenius ring}

In this section, $R$ is a finite commutative local Frobenius ring with residue field $\mathbb{F}_q$, where $q$ is a prime power and $\omega$ is a positive integer such that $|R|=q^\omega$.

\begin{defn} \label{def 2.1}
For a nonnegative integer $\ell$, a pair $\{C, D\}$ of linear codes of length $n$ over $R$ is called an $\ell$-dimension linear intersection pair ($\ell$-DLIP) if $\dim(C\cap D)=\ell.$
\end{defn}

Let $\Re$ be a finite Frobenius ring such $\Re:=\texttt{CRT}(R_1, R_2, \cdots, R_s)$. From Eq.\,\ref{Eq0}, we extend the definition of the dimension of linear codes over finite local Frobenius rings to finite Frobenius rings. Therefore, the definition of $\ell$-DLIP of codes over $\Re$ is given as follows: a pair $\{C, D\}$ of linear codes of length $n$ over a finite Frobenius ring $\Re$ is an $\ell$-DLIP if $\dim(C\cap D)=\ell$. Using CRT, we have the following result.

\begin{prop} \label{propo 2.1.2} Let $\Re$ be a finite Frobenius ring such that $\Re:=\texttt{CRT}(R_1, R_2, \cdots, R_s)$, and $\{C, D\}$ be a pair of linear codes over $R$ such that $C:=\texttt{CRT}(C_1, C_2, \cdots, C_s)$ and $D:=\texttt{CRT}(D_1, D_2, \cdots, D_s)$. If for each $1\leq i\leq s$, $\{C_i, D_i\}$ is an $\ell$-DLIP of codes over $R_i$, then $\{C, D\}$ is an $\ell$-DLIP of codes over $\Re$.
\end{prop}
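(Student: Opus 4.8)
The plan is to reduce everything to the cardinality identity of Lemma \ref{lem00} together with the dimension formula in Eq.\,\ref{Eq0}. The first and conceptually central step is to identify the intersection $C \cap D$ as a Chinese product. Since $\Phi$ is a bijection on $\Re^n$, it preserves set-theoretic intersections, so $\Phi(C \cap D) = \Phi(C) \cap \Phi(D)$. Using $\Phi(C) = C_1 \times \cdots \times C_s$ and $\Phi(D) = D_1 \times \cdots \times D_s$, the intersection of two such product sets factors coordinatewise, giving $\Phi(C \cap D) = (C_1 \cap D_1) \times \cdots \times (C_s \cap D_s)$. Applying $\Phi^{-1} = \texttt{CRT}$ yields
\[
C \cap D = \texttt{CRT}(C_1 \cap D_1, \cdots, C_s \cap D_s).
\]
I would also remark that each $C_i \cap D_i$ is an $R_i$-submodule of $(R_i)^n$, so $C \cap D$ is a genuine Chinese product of codes to which Lemma \ref{lem00} applies.

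Next I would invoke part (1) of Lemma \ref{lem00} to obtain $|C \cap D| = \prod_{i=1}^s |C_i \cap D_i|$. Taking logarithms in base $Q = \prod_{i=1}^s q_i$ and using the definition $\dim(C \cap D) = \log_Q(|C \cap D|)$ together with Eq.\,\ref{Eq0}, the dimension splits as
\[
\dim(C \cap D) = \sum_{i=1}^s \frac{\ln(q_i)}{\ln(Q)}\, \dim(C_i \cap D_i),
\]
where $\dim(C_i \cap D_i) = \log_{q_i}(|C_i \cap D_i|)$ is the dimension over the local ring $R_i$.

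The final step is a direct substitution. By hypothesis each $\{C_i, D_i\}$ is an $\ell$-DLIP over $R_i$, i.e.\ $\dim(C_i \cap D_i) = \ell$ for all $i$. Plugging this in and pulling the common factor $\ell$ out of the sum leaves $\dim(C \cap D) = \ell \cdot \frac{1}{\ln(Q)} \sum_{i=1}^s \ln(q_i) = \ell \cdot \frac{\ln\!\left(\prod_{i=1}^s q_i\right)}{\ln(Q)} = \ell$, since $\prod_{i=1}^s q_i = Q$. Hence $\dim(C \cap D) = \ell$, so $\{C, D\}$ is an $\ell$-DLIP of codes over $\Re$.

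As for difficulty, the only step requiring genuine care is the first: verifying that the component intersections satisfy $\Phi_i(C \cap D) = C_i \cap D_i$, so that the intersection commutes with the CRT decomposition. This is where one must use the bijectivity of $\Phi$ rather than mere $R$-linearity; the remaining steps are a formal application of the cardinality formula and routine logarithmic bookkeeping with the weights $\ln(q_i)/\ln(Q)$.
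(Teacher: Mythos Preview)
Your proof is correct. The paper does not actually supply a proof of this proposition; it simply states ``Using CRT, we have the following result'' and moves on. Your argument is precisely the natural unpacking of that hint: identify $C\cap D$ with $\texttt{CRT}(C_1\cap D_1,\ldots,C_s\cap D_s)$ via the bijectivity of $\Phi$, apply Lemma~\ref{lem00}(1) for the cardinality, feed this into Eq.~\ref{Eq0}, and observe that the weights $\ln(q_i)/\ln(Q)$ sum to $1$ so that the common value $\ell$ survives. There is nothing to compare beyond noting that you have made explicit what the paper leaves to the reader.
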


From the above definition, the following facts are immediate.

\begin{enumerate}
\item[\textbf{Fact} 1:] a linear code $C$ is LCD if $\{C,
C^\perp\}$ is $0$-DLIP (for details information, see
\cite{Bhowmick});
\item[\textbf{Fact} 2:] a free linear $0$-DLIP $\{C, D\}$ with $\dim(C)+\dim(D)=n$ is an LCP (for details
information, see \cite{Liu2021});
\item[\textbf{Fact} 3:] the dimension of hull of any linear code $C$ is $\ell$ if and only if $\{C, C^\perp\}$ is $\ell$-DLIP (for details information about the the hull of cyclic codes over a finite chain ring, see \cite{Tal21}).
\end{enumerate}

Therefore, the generalization of LCD codes, LCP of codes, and the hull of codes can be viewed as $\ell$-DLIP codes. Now, we derive the properties of $\ell$-DLIP of codes over $R$ in terms of their generator and parity-check matrices. In this context, the following propositions are required.

 \begin{prop}\label{lm-1} Let $C$ and $D$ be linear codes over $R$ of length $n$. Then $$\dim(C+D)=\dim(C)+\dim(D)-\dim(C\cap D).$$
 \end{prop}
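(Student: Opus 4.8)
The plan is to translate this additive, module-theoretic statement into a purely multiplicative statement about cardinalities, where the inclusion--exclusion structure becomes transparent. Recall that in the present local setting the ring decomposition is trivial, so the dimension of every linear code $C$ over $R$ reduces to $\dim(C)=\log_q(|C|)$. Hence, after exponentiating with base $q$, the identity to be proved is equivalent to the single cardinality identity
\begin{align*}
|C+D|\cdot|C\cap D| = |C|\cdot|D|,
\end{align*}
and the whole proposition follows once this equality is established.

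First I would observe that $C+D=\{\,\textbf{c}+\textbf{d}:\textbf{c}\in C,\ \textbf{d}\in D\,\}$ is again an $R$-submodule of $R^n$, as is $C\cap D$, so all four objects are finite $R$-modules (in particular finite abelian groups) and their cardinalities are well defined. Next I would invoke the second isomorphism theorem for $R$-modules: the $R$-linear map $C\to(C+D)/D$ sending $\textbf{c}\mapsto\textbf{c}+D$ is surjective, since every coset $(\textbf{c}+\textbf{d})+D$ equals $\textbf{c}+D$, and its kernel is exactly $C\cap D$. This yields the $R$-module isomorphism
\begin{align*}
C/(C\cap D)\;\cong\;(C+D)/D.
\end{align*}
Comparing the cardinalities of the two sides gives $|C|/|C\cap D| = |C+D|/|D|$, which rearranges precisely to the displayed cardinality identity.

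Finally, taking $\log_q$ of both sides of $|C+D|\cdot|C\cap D|=|C|\cdot|D|$ and using $\dim=\log_q|\cdot|$ delivers $\dim(C+D)+\dim(C\cap D)=\dim(C)+\dim(D)$, which is the claim. I do not anticipate a genuine obstacle here: the only points that deserve a word of care are that $\dim$ is defined through cardinality (so that the logarithm converts the multiplicative identity into the additive one) and that $C+D$ is indeed a submodule, so that $\dim(C+D)$ is meaningful. I would also remark that the cardinality argument is insensitive to the choice of logarithmic base, so replacing $q$ by $Q$ and invoking Eq.\,(\ref{Eq0}) extends the identity verbatim to linear codes over an arbitrary finite commutative Frobenius ring through the CRT decomposition, should the more general version be wanted later.
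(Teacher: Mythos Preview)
Your proof is correct and follows essentially the same strategy as the paper: reduce the dimensional identity to the cardinality identity $|C+D|\cdot|C\cap D|=|C|\cdot|D|$ via an isomorphism theorem, then take $\log_q$. The only cosmetic difference is that the paper applies the First Isomorphism Theorem to the addition map $\phi:C\times D\to C+D$, $(\textbf{x},\textbf{y})\mapsto\textbf{x}+\textbf{y}$, whose kernel is $\{(\textbf{x},-\textbf{x}):\textbf{x}\in C\cap D\}\cong C\cap D$, whereas you invoke the Second Isomorphism Theorem $C/(C\cap D)\cong(C+D)/D$; both yield the same cardinality relation immediately.
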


 \begin{proof} Consider the map
 $$
\begin{array}{cccc}
  \phi : & C\times D & \mapsto & C+D \\
    & (\textbf{x}, \textbf{y}) & \mapsto & \textbf{x}+ \textbf{y}.
\end{array}
 $$
 This map is an $R$-module homomorphism. By the definition of $C+D$, the map $\phi$ is surjective. Therefore, according to the First Isomorphism Theorem, $C\times D/\texttt{Ker}(\phi)\simeq  C+D$ (as $R$-modules). Since $C\cap D \simeq \texttt{Ker}(\phi)$ (as $R$-modules), it follows that $\frac{|C\times D|}{|C\cap D|}=|C+D|$. Thus, $\dim(C)+\dim(D)-\dim(C\cap D)=\dim(C+D)$.
 \end{proof}

\begin{prop}\label{lm-2} Let $C$ be a linear code over $R$ of length $n$ and $\mathrm{A}$ be a matrix over $R$
 with $n$ column(s). Then $$\dim(\{\textbf{x}\in C:\;
 A\textbf{x}^\top=\textbf{0}\})=\dim(C)-\texttt{Rank}_q(\mathrm{A}).$$
 \end{prop}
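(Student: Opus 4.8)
The plan is to reproduce, in a slightly refined form, the counting idea already used for Proposition \ref{lm-1}: I would exhibit the subcode $\{\textbf{x}\in C:\,\mathrm{A}\textbf{x}^\top=\textbf{0}\}$ as the kernel of a natural $R$-module homomorphism, and then read off its dimension from the First Isomorphism Theorem together with the relation $|C|=q^{\dim(C)}$, which is valid over the local Frobenius ring $R$ since here $Q=q$.

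First I would let $m$ denote the number of rows of $\mathrm{A}$ and define $\psi:C\to R^m$ by $\psi(\textbf{x})=\mathrm{A}\textbf{x}^\top$. This is an $R$-module homomorphism whose kernel is exactly $\texttt{Ker}(\psi)=\{\textbf{x}\in C:\,\mathrm{A}\textbf{x}^\top=\textbf{0}\}$, the code appearing in the statement. The First Isomorphism Theorem then yields $C/\texttt{Ker}(\psi)\simeq\psi(C)$ as $R$-modules, so that $|C|=|\texttt{Ker}(\psi)|\cdot|\psi(C)|$. Applying $\log_q$ converts this multiplicative relation into the additive identity $\dim(C)=\dim(\texttt{Ker}(\psi))+\dim(\psi(C))$, and the whole problem is thereby reduced to proving that $\dim(\psi(C))=\texttt{Rank}_q(\mathrm{A})$.

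The hard part will be precisely this identification, for two linked reasons: the image $\psi(C)=\{\mathrm{A}\textbf{x}^\top:\,\textbf{x}\in C\}$ lives on the \emph{column} side of $\mathrm{A}$, whereas $\texttt{Rank}_q(\mathrm{A})$ is defined through the $R$-module $M$ spanned by the \emph{rows} of $\mathrm{A}$; and a priori $\psi(C)$ is only a submodule of the full column space $\psi(R^n)$. For the column-versus-row comparison I would invoke the Frobenius hypothesis in the form of the cardinality duality $|N|\cdot|N^\perp|=|R|^n=q^{n\omega}$ that underlies Eq.\,\ref{eq*}: the kernel of $\textbf{x}\mapsto\mathrm{A}\textbf{x}^\top$ on all of $R^n$ is the set of vectors orthogonal to every row of $\mathrm{A}$, namely $M^\perp$, so the First Isomorphism Theorem applied over $R^n$ gives $|\psi(R^n)|=q^{n\omega}/|M^\perp|$, and the duality collapses the right-hand side to $|M|$, whence $\dim(\psi(R^n))=\log_q|M|=\texttt{Rank}_q(\mathrm{A})$. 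The remaining and most delicate point is that restricting the map from $R^n$ to $C$ does not shrink the image, i.e.\ $\psi(C)=\psi(R^n)$ attains the full row-rank dimension; this is where the genuine content of the proposition sits, and I would establish it using the self-dual counting behaviour of codes over the Frobenius ring $R$. Once $\dim(\psi(C))=\texttt{Rank}_q(\mathrm{A})$ is secured, substituting it into the additive identity and rearranging gives $\dim(\{\textbf{x}\in C:\,\mathrm{A}\textbf{x}^\top=\textbf{0}\})=\dim(C)-\texttt{Rank}_q(\mathrm{A})$, as claimed.
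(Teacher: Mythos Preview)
Your setup via the map $\psi(\textbf{x})=\mathrm{A}\textbf{x}^\top$ and the First Isomorphism Theorem is exactly what the paper does, and you have correctly isolated the point the paper glosses over: the paper's argument ends at $\dim(C)-\dim(\texttt{Ker}(\psi))=\dim(\texttt{Im}(\psi))$ and never justifies why $\dim(\texttt{Im}(\psi))$ should equal $\texttt{Rank}_q(\mathrm{A})$. Your duality argument showing $\dim(\psi(R^n))=\texttt{Rank}_q(\mathrm{A})$ is a genuine addition and is correct over a Frobenius ring.

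The gap is your final step. You assert that you would establish $\psi(C)=\psi(R^n)$ ``using the self-dual counting behaviour of codes over the Frobenius ring $R$'', but this equality is simply false in general, and no amount of Frobenius duality will rescue it. Take $R=\mathbb{F}_q$, $n=2$, $C=\{\textbf{0}\}$, and $\mathrm{A}=I_2$: then $\psi(C)=\{\textbf{0}\}$ while $\psi(R^n)=R^2$. The same example shows the \emph{statement} of the proposition, read literally, is false: the left-hand side is $0$ and the right-hand side is $-2$. More generally, $\psi(C)=\psi(R^n)$ holds precisely when $C+M^\perp=R^n$, where $M$ is the row space of $\mathrm{A}$; this is an extra hypothesis, not a consequence of the Frobenius property.

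So your first two steps are fine (and in fact sharper than the paper's own proof), but the third step cannot be carried out as stated. What the later applications in the paper actually need is either the case $C=R^n$ (Lemma~\ref{th2.00}), where your argument is complete as written, or the bare rank--nullity identity $\dim(\texttt{Ker}(\psi))=\dim(C)-\dim(\psi(C))$ together with a separate identification of $\psi(C)$ in the specific situation of Lemma~\ref{lm2}. You should flag that the proposition requires an additional hypothesis (or a reinterpretation of $\texttt{Rank}_q(\mathrm{A})$ as $\dim(\psi(C))$) rather than promise a proof of $\psi(C)=\psi(R^n)$.
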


 \begin{proof}
 We define a map
$$\begin{array}{cccc}
  \psi: & C & \rightarrow & \texttt{Im}(\psi) \\
    & \textbf{x} & \mapsto & \mathrm{A}\textbf{x}^\top,
\end{array}$$
which is an $R$-module epimorphism. Since $\texttt{ker}(\psi)=\{\textbf{x}\in C\;:\; \mathrm{A}\textbf{x}^\top=\textbf{0}\}$, it is easy to verify that $\texttt{ker}(\psi)$ and the solution space of $\mathrm{A}\textbf{x}^\top=\textbf{0}$ both are $R$-module isomorphic. By the First Isomorphism Theorem, $C/\texttt{ker}(\psi)\simeq \texttt{Im}(\psi)$ (as $R$-modules). Thus, $\dfrac{\mid C\mid}{\mid \texttt{ker}(\phi)\mid}=|\texttt{Im}(\psi)|$. Therefore, $$\log_q(\mid C\mid)-\log_q(\mid \texttt{ker}(\psi)\mid)=\log_q(|\texttt{Im}(\psi)|).$$ From the definition of the dimension of a linear code, we have $\dim(C)-\dim(\texttt{ker}(\psi))=\dim(\texttt{Im}(\psi))$.
 \end{proof}

\begin{prop}\label{th1.00} Let $\{C, D\}$ be a pair of linear codes of length $n$ over $R$. Then \begin{align}\dim(C^\perp\cap D^\perp)=n\omega-\dim(C)-\dim(D) +\dim(C\cap D).\end{align}
\end{prop}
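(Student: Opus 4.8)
The plan is to reduce the whole statement to the duality identity $C^\perp\cap D^\perp=(C+D)^\perp$ and then feed this into the two tools already at hand, namely the dimension--codimension relation (\ref{eq*}) for dual codes and the inclusion--exclusion formula of Proposition \ref{lm-1}.

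First I would verify the set-theoretic identity $C^\perp\cap D^\perp=(C+D)^\perp$. This rests only on the bilinearity of the standard inner product and needs no further hypothesis on $R$: a vector $\textbf{u}$ lies in $(C+D)^\perp$ exactly when $[\textbf{u},\textbf{c}+\textbf{d}]=0_R$ for all $\textbf{c}\in C$ and $\textbf{d}\in D$; specializing $\textbf{d}=\textbf{0}$ and $\textbf{c}=\textbf{0}$ shows this forces $\textbf{u}\in C^\perp$ and $\textbf{u}\in D^\perp$, while conversely $[\textbf{u},\textbf{c}+\textbf{d}]=[\textbf{u},\textbf{c}]+[\textbf{u},\textbf{d}]$ extends annihilation of $C$ and $D$ separately to all of $C+D$. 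Hence $\dim(C^\perp\cap D^\perp)=\dim\bigl((C+D)^\perp\bigr)$.

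Next I would apply the relation (\ref{eq*}) to the single linear code $C+D$, obtaining $\dim\bigl((C+D)^\perp\bigr)=n\omega-\dim(C+D)$; this is the only place where the Frobenius condition enters, since (\ref{eq*}) relies on $|E|\cdot|E^\perp|=|R|^n$. Combining this with the identity from the first step gives $\dim(C^\perp\cap D^\perp)=n\omega-\dim(C+D)$. Finally I would substitute the value of $\dim(C+D)$ supplied by Proposition \ref{lm-1}, namely $\dim(C+D)=\dim(C)+\dim(D)-\dim(C\cap D)$, which yields $n\omega-\dim(C)-\dim(D)+\dim(C\cap D)$, exactly the asserted formula.

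I do not expect a genuine obstacle here: the argument is essentially mechanical once the duality identity of the first step is in place, and that identity is itself routine. The proposition is thus a clean corollary of the inclusion--exclusion principle for $\dim$ (Proposition \ref{lm-1}) together with the dimension duality (\ref{eq*}), and the whole proof amounts to three substitutions. If anything warrants a word of care, it is simply making explicit that (\ref{eq*}) is being invoked for the code $C+D$ rather than for $C$ or $D$ individually.
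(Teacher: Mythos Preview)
Your argument is correct and mirrors the paper's almost exactly: the paper invokes the companion identity $C^\perp+D^\perp=(C\cap D)^\perp$, applies Proposition~\ref{lm-1} to the pair $(C^\perp,D^\perp)$, and then uses (\ref{eq*}) for $C$, $D$, and $C\cap D$ separately, whereas you use $C^\perp\cap D^\perp=(C+D)^\perp$ and need (\ref{eq*}) only once, for $C+D$. The two routes are dual to one another and equally valid; yours is marginally more economical.
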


\begin{proof} It is well known that $C^\perp+D^\perp=(C\cap D)^\perp.$ From Proposition \ref{lm-1}, we obtain
$$\dim(C^\perp+D^\perp)=\dim(C^\perp)+\dim(D^\perp)-\dim(C^\perp\cap D^\perp).$$ Besides, Eq.\ref{eq*} gives $\dim((C\cap D)^\perp)=n\omega-\dim(C\cap D).$ This means $\dim(C^\perp\cap D^\perp)=n\omega-\dim(C)-\dim(D) +\dim(C\cap D)$.
\end{proof}

To establish the necessary and sufficient condition of $\ell$-DLIP of codes, the following lemmas are needed.

\begin{lemma}\label{th2.00}
Let $C_i$ be a linear code of length $n$ over $R$ with generator matrix $\mathrm{G}_i$, for $i=1,2$. Then $$\texttt{Rank}_q \left(
{\begin{array}{ccc}
   \mathrm{G}_1 \\
   \mathrm{G}_2 \\
   \end{array} } \right)=\texttt{Rank}_q(\mathrm{G}_1)+\texttt{Rank}_q(\mathrm{G}_2)-\dim(C_1\cap C_2).$$
\end{lemma}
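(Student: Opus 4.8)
The plan is to reduce the identity entirely to Proposition~\ref{lm-1} by unwinding the definition of $\texttt{Rank}_q$. Recall that for a matrix $\mathrm{A}$ over $R$ whose rows span a code $C$, the quantity $\texttt{Rank}_q(\mathrm{A})$ is by definition $\log_q(|C|)=\dim(C)$. Hence the asserted equation is really a statement about the dimensions of three associated codes, and the matrix language is only a convenient repackaging of it.

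First I would attach a code to each of the three matrices in the claim. Since $\mathrm{G}_i$ is a generator matrix of $C_i$, its rows span $C_i$, so $\texttt{Rank}_q(\mathrm{G}_i)=\dim(C_i)$ for $i=1,2$. For the stacked matrix, I would note that the $R$-row span of $\left(\begin{smallmatrix}\mathrm{G}_1\\ \mathrm{G}_2\end{smallmatrix}\right)$ consists precisely of the sums $\textbf{x}+\textbf{y}$ where $\textbf{x}$ is an $R$-linear combination of the rows of $\mathrm{G}_1$ and $\textbf{y}$ an $R$-linear combination of the rows of $\mathrm{G}_2$; that is, it equals $C_1+C_2$. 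Consequently $\texttt{Rank}_q\left(\begin{smallmatrix}\mathrm{G}_1\\ \mathrm{G}_2\end{smallmatrix}\right)=\dim(C_1+C_2)$.

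With these three identifications in hand, the conclusion drops out by substituting into Proposition~\ref{lm-1}, which gives $\dim(C_1+C_2)=\dim(C_1)+\dim(C_2)-\dim(C_1\cap C_2)$; rewriting each dimension as the corresponding rank yields exactly the claimed equality. I do not expect any genuine obstacle here: the only points requiring care are the two bookkeeping facts that $\texttt{Rank}_q(\mathrm{G}_i)=\dim(C_i)$ and that the stacked matrix generates $C_1+C_2$, both of which are immediate from the definitions. The one subtlety worth recording explicitly is that these facts use only that the relevant rows span the code, not that they form a basis, so no minimality or freeness assumption on $\mathrm{G}_1,\mathrm{G}_2$ is needed for the argument.
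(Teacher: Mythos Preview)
Your proof is correct, and it takes a genuinely different and more economical route than the paper's own argument. You observe directly that the row span of the stacked matrix $\mathrm{B}=\left(\begin{smallmatrix}\mathrm{G}_1\\ \mathrm{G}_2\end{smallmatrix}\right)$ is $C_1+C_2$, so $\texttt{Rank}_q(\mathrm{B})=\dim(C_1+C_2)$, and then Proposition~\ref{lm-1} alone finishes the job. The paper instead passes through duality: it characterizes $C_1^\perp\cap C_2^\perp$ as the solution set of $\mathrm{B}\textbf{x}^\top=\textbf{0}$, applies Proposition~\ref{lm-2} to obtain $\dim(C_1^\perp\cap C_2^\perp)=n\omega-\texttt{Rank}_q(\mathrm{B})$, and then invokes Proposition~\ref{th1.00} (which itself rests on Eq.~(\ref{eq*}) and hence on the Frobenius hypothesis) to rewrite $\dim(C_1^\perp\cap C_2^\perp)$ in terms of $\dim(C_1)$, $\dim(C_2)$, and $\dim(C_1\cap C_2)$. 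Your argument is shorter, uses only Proposition~\ref{lm-1}, and does not appeal to duality or the Frobenius identity $|C|\cdot|C^\perp|=|R|^n$; the paper's detour through the dual side is not needed for this lemma, though it does exercise the same tools (Propositions~\ref{lm-2} and~\ref{th1.00}) that are reused later in the section.
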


\begin{proof} We set $\dim(C_1\cap C_2)=\ell.$ We know that $x \in C_1^\perp$ if and only if $\mathrm{G}_1x^\top={\bf 0}$ and $x \in C_2^\perp$ if and only if $\mathrm{G}_2x^\top={\bf 0}$. Thus, $x \in C_1^\perp \cap C_2^\perp$ if and only if $\mathrm{B}x^\top={\bf 0},$ where
$\mathrm{B} = \left( {\begin{array}{ccc}
   \mathrm{G}_1 \\
   \mathrm{G}_2 \\
   \end{array} } \right)$.
From Proposition \ref{lm-2}, we obtain $\dim\{\textbf{x}\in R^n\;:\; \mathrm{B}\textbf{x}^\top=\textbf{0}\}=n\omega-\texttt{Rank}_q(\mathrm{B}).$ Combining it with Proposition \ref{th1.00}, we get $n\omega-\dim(C_1)-\dim(C_2)+\ell=n\omega-\texttt{Rank}_q(B),$ which means
$\texttt{Rank}_q(\mathrm{B})=\texttt{Rank}_q(\mathrm{G}_1)+\texttt{Rank}_q(\mathrm{G}_2)-\ell$,
as $\dim(C_i)=\texttt{Rank}_q(\mathrm{G}_i)$. Therefore,
$$\texttt{Rank}_q \left( {\begin{array}{ccc}
   \mathrm{G}_1 \\
   \mathrm{G}_2 \\
   \end{array} } \right)=\texttt{Rank}_q(\mathrm{G}_1)+\texttt{Rank}_q(\mathrm{G}_2)-\dim(C_1\cap C_2).$$
\end{proof}

Combining Definition \ref{def 2.1} with Lemma \ref{th2.00}, the two results are immediate consequences.

\begin{remark} Let $C_i$ be a linear code of length $n$ over $R$ with generator matrix $\mathrm{G}_i$, for $i=1,2$. Then $\{C_1, C_2\}$ is an $\ell$-DLIP if and only if
$$\texttt{Rank}_q \left( {\begin{array}{ccc}
   \mathrm{G}_1 \\
   \mathrm{G}_2 \\
   \end{array} } \right)=\texttt{Rank}_q(\mathrm{G}_1)+\texttt{Rank}_q(\mathrm{G}_2)-\ell.$$
\end{remark}

\begin{remark}\label{th2.10}
Let $C_i$ be a linear code of length $n$ over $R$ with parity check matrix $\mathrm{H}_i$, for $i=1,2$. Then
$$\dim(C_1^\perp\cap C_2^\perp)=\texttt{Rank}_q(\mathrm{H}_1)+\texttt{Rank}_q(\mathrm{H}_2)-\texttt{Rank}_q \left(
{\begin{array}{ccc}
   \mathrm{H}_1 \\
   \mathrm{H}_2 \\
   \end{array} } \right).$$
\end{remark}

\begin{lemma}\label{lm2}
Let $C_i$ be a linear code of length $n$ over $R$ with generator matrix $\mathrm{G}_i$ and parity check matrix $\mathrm{H}_i$, for $i=1,2$. Then
$$\dim(C_1\cap C_2)=\dim(C_1)-\texttt{Rank}_q(\mathrm{H}_2\mathrm{G}_1^\top)$$ or
$$\dim(C_1\cap C_2)=\dim(C_2)-\texttt{Rank}_q(\mathrm{H}_1\mathrm{G}_2^\top).$$
\end{lemma}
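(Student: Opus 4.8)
The plan is to prove the first identity by the First Isomorphism Theorem, in the spirit of the proofs of Propositions \ref{lm-1} and \ref{lm-2}, and then obtain the second by symmetry. First I would introduce the $R$-module epimorphism
$$\psi:C_1\longrightarrow \texttt{Im}(\psi),\qquad \textbf{x}\mapsto \mathrm{H}_2\textbf{x}^\top.$$
Since $\mathrm{H}_2$ is a parity-check matrix for $C_2$ (equivalently, a generator matrix for $C_2^\perp$), the condition $\mathrm{H}_2\textbf{x}^\top=\textbf{0}$ says exactly that $\textbf{x}\in(C_2^\perp)^\perp=C_2$, where the last equality is the double-dual identity valid over Frobenius rings, itself a consequence of $|C_2|\,|C_2^\perp|=|R|^n$. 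Hence $\texttt{Ker}(\psi)=C_1\cap C_2$, and the First Isomorphism Theorem together with the definition of dimension gives
$$\dim(C_1)-\dim(C_1\cap C_2)=\dim(\texttt{Im}(\psi)).$$

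The remaining task is to identify $\dim(\texttt{Im}(\psi))$ with $\texttt{Rank}_q(\mathrm{H}_2\mathrm{G}_1^\top)$. Writing each $\textbf{x}\in C_1$ as $\textbf{x}=\textbf{a}\mathrm{G}_1$ with $\textbf{a}$ ranging over $R^{k}$ (where $k$ is the number of rows of $\mathrm{G}_1$, whose rows generate $C_1$), I would compute $\mathrm{H}_2\textbf{x}^\top=\mathrm{H}_2\mathrm{G}_1^\top\textbf{a}^\top$. Thus $\texttt{Im}(\psi)$ is precisely the column module $\texttt{Col}(\mathrm{H}_2\mathrm{G}_1^\top)$; equivalently, the images $\psi(\textbf{g}_i)$ of the rows $\textbf{g}_i$ of $\mathrm{G}_1$ are exactly the columns of $\mathrm{H}_2\mathrm{G}_1^\top$. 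Consequently $\dim(\texttt{Im}(\psi))$ equals the dimension of the column code of $\mathrm{H}_2\mathrm{G}_1^\top$.

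The main obstacle is that $\texttt{Rank}_q$ is defined through the \emph{row} code, whereas the computation above naturally produces the \emph{column} code; so the crux is the equality of row and column rank over $R$, namely $\texttt{Rank}_q(\mathrm{M})=\texttt{Rank}_q(\mathrm{M}^\top)$ for every matrix $\mathrm{M}$. I would establish this from the Frobenius identity: for $\mathrm{M}$ of size $m\times n$, the map $\textbf{x}\mapsto \textbf{x}\mathrm{M}$ has image $\texttt{Row}(\mathrm{M})$ and kernel $N:=\{\textbf{x}\in R^m:\textbf{x}\mathrm{M}=\textbf{0}\}$, so $\dim\texttt{Row}(\mathrm{M})=m\omega-\dim N$; but $N$ is exactly the dual (in $R^m$) of $\texttt{Col}(\mathrm{M})$, whence $|\texttt{Col}(\mathrm{M})|\,|N|=|R|^m$ gives $\dim\texttt{Col}(\mathrm{M})=m\omega-\dim N=\dim\texttt{Row}(\mathrm{M})$. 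Applying this to $\mathrm{M}=\mathrm{H}_2\mathrm{G}_1^\top$ yields $\dim(\texttt{Im}(\psi))=\texttt{Rank}_q(\mathrm{H}_2\mathrm{G}_1^\top)$, and substituting into the displayed equation gives the first identity. Finally, interchanging the indices $1$ and $2$ (using $\mathrm{H}_1$ as parity-check matrix of $C_1$ and $\mathrm{G}_2$ as generator matrix of $C_2$) yields the second identity, completing the proof.
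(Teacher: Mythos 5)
Your proof is correct and follows essentially the same route as the paper's: both identify $C_1\cap C_2$ with the kernel of the map $\textbf{x}\mapsto \mathrm{H}_2\textbf{x}^\top$ restricted to $C_1$ (the paper does this by pulling back along the parametrization $\alpha\mapsto\alpha\mathrm{G}_1$ and invoking its Proposition \ref{lm-2}), and both conclude via the First Isomorphism Theorem. The one substantive difference is that you make explicit, and correctly justify, the step the paper leaves buried: that $\dim(\texttt{Im}(\psi))$, which is naturally the dimension of the \emph{column} module of $\mathrm{H}_2\mathrm{G}_1^\top$, equals $\texttt{Rank}_q(\mathrm{H}_2\mathrm{G}_1^\top)$, which the paper defines via the \emph{row} module. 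Your derivation of row rank $=$ column rank from the Frobenius identity $|\texttt{Col}(\mathrm{M})|\cdot|\texttt{Col}(\mathrm{M})^\perp|=|R|^m$ is exactly the missing justification, and your observation that $\textbf{a}$ may range over all of $R^k$ (so that $\texttt{Im}(\psi)$ is the \emph{full} column module even when $C_1$ is not free) sidesteps a subtlety in the paper's appeal to Proposition \ref{lm-2} on a proper submodule. So: correct, same strategy, slightly more careful at the rank-identification step.
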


\begin{proof} We set $d_i:=\texttt{Rk}_R(C_i).$ Then there is a submodule $\textgoth{M}$ of $R^{d_1}$ such that
$$
\begin{array}{cccc}
  \phi : & \textgoth{M} & \rightarrow &  C_1 \\
    & \textbf{m} & \mapsto & \textbf{m}\mathrm{G}_1
\end{array}
$$
is an $R$-module isomorphism. $x\in C_1\cap C_2$ if and only if there is $(\alpha, \beta)$ in $\textgoth{M}\times R^{d_2}$ such that $x=\alpha \mathrm{G}_1=\beta \mathrm{G}_2.$ Therefore, $\mathrm{H}_2x^\top=\mathrm{H}_2(\alpha
\mathrm{G}_1)^\top=(\mathrm{H}_2\mathrm{G}_1^\top)\alpha^\top$ and $\mathrm{H}_2x^\top=\mathrm{H}_2(\beta
\mathrm{G}_2)^\top=(\mathrm{H}_2\mathrm{G}_2^\top)\beta^\top=\textbf{0}$, since $\mathrm{H}_2\mathrm{G}_2^\top=\textbf{O}$. Thus, $C_1 \cap C_2=\{\alpha \mathrm{G}_1:\;(\exists \alpha\in \textgoth{M})((\mathrm{H}_2\mathrm{G}_1^\top)\alpha^\top={\bf 0})\}$. Since $\phi$ is an isomorphism, $C_1\cap C_2$ and
$$\{\alpha \in
\textgoth{M}\;:\;(\mathrm{H}_2\mathrm{G}_1^\top)\alpha^\top={\bf 0})\}$$ are isomorphic. Therefore, from Proposition \ref{lm-2} we have
$$\dim(C_1\cap C_2)=\dim(\{\alpha \in
\textgoth{M}:\;(\mathrm{H}_2\mathrm{G}_1^\top)\alpha^\top={\bf
0})\})=\dim(C_1)-\texttt{Rank}_q(\mathrm{H}_2\mathrm{G}_1^\top).$$
\end{proof}

Next, we give an essential characterization of $\ell$-DLIP of codes which is a straightforward application of Lemma \ref{lm2}.

\begin{thm}\label{th2}
Let $C_i$ be a linear code of length $n$ over $R$ with generator
matrix $\mathrm{G}_i$ and parity check matrix $\mathrm{H}_i$, for
$i=1,2$. Then $\{C_1, C_2\}$ is an $\ell$-DLIP if and only if
$\texttt{Rank}_q(\mathrm{H}_2\mathrm{G}_1^\top)=\texttt{Rank}_q(\mathrm{G}_1)-\ell$
or
$\texttt{Rank}_q(\mathrm{H}_1\mathrm{G}_2^\top)=\texttt{Rank}_q(\mathrm{G}_2)-\ell.$
\end{thm}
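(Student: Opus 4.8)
The plan is to read the theorem off directly from Lemma \ref{lm2} combined with the definition of an $\ell$-DLIP, since the genuine work has already been done in that lemma. First I would invoke Definition \ref{def 2.1}: the pair $\{C_1, C_2\}$ is an $\ell$-DLIP precisely when $\dim(C_1\cap C_2)=\ell$. Thus the whole statement reduces to re-expressing $\dim(C_1\cap C_2)$ in terms of the ranks appearing in the conclusion and then setting that expression equal to $\ell$.

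Next I would apply Lemma \ref{lm2}, which already supplies the two needed formulas
$$\dim(C_1\cap C_2)=\dim(C_1)-\texttt{Rank}_q(\mathrm{H}_2\mathrm{G}_1^\top)=\dim(C_2)-\texttt{Rank}_q(\mathrm{H}_1\mathrm{G}_2^\top).$$
The only remaining ingredient is the identity $\dim(C_i)=\texttt{Rank}_q(\mathrm{G}_i)$, which holds because the rows of a generator matrix $\mathrm{G}_i$ span $C_i$, so $\texttt{Rank}_q(\mathrm{G}_i)=\log_q(|C_i|)=\dim(C_i)$ by the definition of $\texttt{Rank}_q$ recorded in Section \ref{sec:2}. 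Substituting this identity into the first formula gives $\dim(C_1\cap C_2)=\texttt{Rank}_q(\mathrm{G}_1)-\texttt{Rank}_q(\mathrm{H}_2\mathrm{G}_1^\top)$, and into the second gives $\dim(C_1\cap C_2)=\texttt{Rank}_q(\mathrm{G}_2)-\texttt{Rank}_q(\mathrm{H}_1\mathrm{G}_2^\top)$.

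Finally, imposing $\dim(C_1\cap C_2)=\ell$ on either expression and rearranging yields exactly the conditions $\texttt{Rank}_q(\mathrm{H}_2\mathrm{G}_1^\top)=\texttt{Rank}_q(\mathrm{G}_1)-\ell$ and $\texttt{Rank}_q(\mathrm{H}_1\mathrm{G}_2^\top)=\texttt{Rank}_q(\mathrm{G}_2)-\ell$. Because Lemma \ref{lm2} certifies that both formulas compute the same quantity $\dim(C_1\cap C_2)$, the two conditions are logically equivalent; this is what justifies the word ``or'' in the statement, which signals that one is free to verify whichever of the two is more convenient rather than offering two distinct outcomes.

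There is no serious obstacle here: the entire substance is carried by Lemma \ref{lm2}. The one point deserving a moment's care is the translation of the intersection-dimension equation into a rank equation, which rests squarely on the definitional fact $\dim(C_i)=\texttt{Rank}_q(\mathrm{G}_i)$ for a generator matrix; I would state this explicitly so the rearrangement is transparent, and I would emphasize that the ``or'' reflects two interchangeable formulas for a single number rather than a case distinction.
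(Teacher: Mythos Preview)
Your proposal is correct and follows exactly the approach the paper takes: the paper itself offers no separate proof, remarking only that the theorem is ``a straightforward application of Lemma \ref{lm2}.'' Your write-up simply fills in the details of that remark, invoking Definition \ref{def 2.1} and the identity $\dim(C_i)=\texttt{Rank}_q(\mathrm{G}_i)$ just as the paper does implicitly (cf.\ the proof of Lemma \ref{th2.00}).
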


\begin{example} In $\Re_3:=\mathbb{F}_2[u_1, u_2, u_3]$,
consider $C_1:=\langle u_2, u_3\rangle$ and $C_2:=\langle u_1, u_3\rangle$. Then $$\mathrm{G}_1:=\left(%
\begin{array}{c}
  u_2 \\
  u_3
\end{array}%
\right), \mathrm{H}_1:=\left(u_2u_3\right),\,\mathrm{G}_2:=\left(%
\begin{array}{c}
  u_1 \\
  u_3
\end{array}%
\right),  \text{ and } \mathrm{H}_2:=\left(u_1u_3\right).$$ Thus, $\dim(C_1)=\dim(C_2)=2^3-2=6$. Now, $\mathrm{H}_2\mathrm{G}_1^\top=\mathrm{H}_1\mathrm{G}_2^\top=\left(%
\begin{array}{cc}
  u_1u_2u_3 & 0
\end{array}%
\right).$ Since $\langle
u_1u_2u_3\rangle=\texttt{J}(\Re_3)^\perp$, then
$\texttt{Rank}_q(\mathrm{H}_1\mathrm{G}_2^\top)=\texttt{Rank}_q(\mathrm{H}_2\mathrm{G}_1^\top)=\log_2|\langle
u_1u_2u_3\rangle|=1$, and so $\dim(C_1\cap C_2)=6-1=5$. Indeed, $\{ u_2, u_3, u_1u_2, u_1u_3, u_2u_3, u_1u_2u_3\}
\text{ and } \{ u_1, u_3, u_1u_2, u_1u_3, u_2u_3, u_1u_2u_3\}$ are $\mathbb{F}_2$-bases of $C_1$ and $C_2$, respectively. It deduces $\{ u_3, u_1u_2, u_1u_3, u_2u_3, u_1u_2u_3\}$ is an $\mathbb{F}_2$-basis of $C_1\cap C_2$ and $\left(%
\begin{array}{c}
  u_3 \\
  u_1u_2
\end{array}
\right)$ is a generator matrix for $C_1\cap C_2$. Hence $\{C_1, C_2\}$ is $5$-DLIP.
\end{example}

\begin{cor}
 Let $C_i$ be a linear code of length $n$ over $R$ with generator matrix $\mathrm{G}_i$  and parity check matrix $\mathrm{H}_i$, for $i=1,2$. Then a pair $\{C_1, C_2\}$ of codes with $C_1+C_2=R^n$ is $\ell$-DLIP if and only if
$\texttt{Rank}_q(\mathrm{H}_2\mathrm{G}_1^\top)=\texttt{Rank}_q(\mathrm{G}_1)-\ell=n\omega-\texttt{Rank}_q(\mathrm{G}_2)$
or
$\texttt{Rank}_q(\mathrm{H}_1\mathrm{G}_2^\top)=\texttt{Rank}_q(\mathrm{G}_2)-\ell=n\omega-\texttt{Rank}_q(\mathrm{G}_1)$.
\end{cor}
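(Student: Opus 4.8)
The plan is to combine Theorem \ref{th2} with the dimension count forced by the hypothesis $C_1+C_2=R^n$. First I would observe that, since $R$ is a local Frobenius ring with $|R|=q^\omega$, the full space satisfies $|R^n|=q^{n\omega}$, so $\dim(R^n)=n\omega$. Applying Proposition \ref{lm-1} to the pair $\{C_1,C_2\}$ then yields
\[
\dim(C_1\cap C_2)=\dim(C_1)+\dim(C_2)-\dim(C_1+C_2)=\texttt{Rank}_q(\mathrm{G}_1)+\texttt{Rank}_q(\mathrm{G}_2)-n\omega,
\]
using $\dim(C_i)=\texttt{Rank}_q(\mathrm{G}_i)$. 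This identity is the engine of the argument: it shows that under $C_1+C_2=R^n$ the intersection dimension is completely pinned down by the two generator ranks.

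Next I would rewrite this in the two symmetric forms appearing in the statement. Setting $\ell:=\dim(C_1\cap C_2)$, the displayed identity is equivalent to each of
\[
\texttt{Rank}_q(\mathrm{G}_1)-\ell=n\omega-\texttt{Rank}_q(\mathrm{G}_2)\qquad\text{and}\qquad \texttt{Rank}_q(\mathrm{G}_2)-\ell=n\omega-\texttt{Rank}_q(\mathrm{G}_1).
\]
These are exactly the right-hand equalities in the two alternatives of the corollary, so they hold automatically whenever $C_1+C_2=R^n$ and $\{C_1,C_2\}$ is $\ell$-DLIP.

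For the forward direction I would assume $\{C_1,C_2\}$ is $\ell$-DLIP. Theorem \ref{th2} then supplies at least one of the rank equalities $\texttt{Rank}_q(\mathrm{H}_2\mathrm{G}_1^\top)=\texttt{Rank}_q(\mathrm{G}_1)-\ell$ or $\texttt{Rank}_q(\mathrm{H}_1\mathrm{G}_2^\top)=\texttt{Rank}_q(\mathrm{G}_2)-\ell$; appending the matching equality from the previous paragraph produces the full chain demanded by the statement. For the converse I would assume one of the two complete chains, say the first, and read off its initial equality $\texttt{Rank}_q(\mathrm{H}_2\mathrm{G}_1^\top)=\texttt{Rank}_q(\mathrm{G}_1)-\ell$, which is precisely the hypothesis of Theorem \ref{th2} and hence gives that $\{C_1,C_2\}$ is $\ell$-DLIP directly.

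I do not anticipate a genuine obstacle, since the corollary is essentially a specialization of Theorem \ref{th2} augmented by the dimension count forced by $C_1+C_2=R^n$. The only point needing a little care is the bookkeeping that identifies $\dim(R^n)$ with $n\omega$ and keeps each of the two symmetric alternatives aligned with the correct generator and parity-check matrices; everything else is a direct substitution.
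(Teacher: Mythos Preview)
Your proposal is correct and follows essentially the same approach as the paper: identify $\dim(C_1+C_2)=n\omega$, apply Proposition~\ref{lm-1} to obtain $\texttt{Rank}_q(\mathrm{G}_1)+\texttt{Rank}_q(\mathrm{G}_2)=n\omega+\ell$, and combine this with Lemma~\ref{lm2} (equivalently Theorem~\ref{th2}). Your write-up is in fact more explicit about the two directions than the paper's terse proof, but the underlying argument is identical.
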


\begin{proof}
Since $C_1+C_2=R^n$, it follows that $\dim(C_1+C_2)=n\omega$. From Proposition \ref{lm-1}, we obtain $\texttt{Rank}_q(\mathrm{G}_1)+\texttt{Rank}_q(\mathrm{G}_2)=n\omega+\ell$. Combining it with Lemma \ref{lm2} follows the proof of the theorem.
\end{proof}

In the case when $C_2=C_1^\perp$, we have $ \texttt{Hull}(C_1)=\texttt{Hull}(C_1^\perp)= C_1\cap C_1^\perp$. Thus, we have the following result.

\begin{cor}
Let $C$ be a linear code over $R$ of length $n$ with generator matrix $\mathrm{G}$ and parity check matrix $\mathrm{H}$. Then
$\dim(\texttt{Hull}(C))=\ell$ if and only if
$\texttt{Rank}_q(\mathrm{G}\mathrm{G}^\top)=\texttt{Rank}_q(\mathrm{G})-\ell$
or
$\texttt{Rank}_q(\mathrm{H}\mathrm{H}^\top)=\texttt{Rank}_q(\mathrm{H})-\ell.$
\end{cor}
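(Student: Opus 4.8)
The plan is to apply Theorem~\ref{th2} directly to the pair $\{C, C^\perp\}$. The sentence preceding this corollary records $\texttt{Hull}(C) = C \cap C^\perp$, so $\dim(\texttt{Hull}(C)) = \ell$ is exactly the assertion that $\{C, C^\perp\}$ is an $\ell$-DLIP. The task therefore reduces to rewriting the two rank conditions of Theorem~\ref{th2} in terms of the single pair of matrices $\mathrm{G}, \mathrm{H}$ attached to $C$.

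First I would set $C_1 := C$ and $C_2 := C^\perp$ in Theorem~\ref{th2}, so that $\mathrm{G}_1 = \mathrm{G}$ and $\mathrm{H}_1 = \mathrm{H}$. The crux is to identify correctly the generator and parity-check matrices of $C_2 = C^\perp$. Since a parity-check matrix of $C$ is by definition a generator matrix of $C^\perp$, one may take $\mathrm{G}_2 = \mathrm{H}$. For the parity-check matrix $\mathrm{H}_2$ of $C^\perp$, I would invoke the double-dual identity $(C^\perp)^\perp = C$, which holds over the Frobenius ring $R$: applying $|C|\cdot|C^\perp| = |R|^n$ to both $C$ and $C^\perp$ gives $|C| = |(C^\perp)^\perp|$, and combined with the inclusion $C \subseteq (C^\perp)^\perp$ this forces equality. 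A generator matrix of $(C^\perp)^\perp = C$ is then a parity-check matrix of $C^\perp$, so $\mathrm{H}_2 = \mathrm{G}$.

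With these identifications in hand, the two equivalent conditions of Theorem~\ref{th2}, namely $\texttt{Rank}_q(\mathrm{H}_2\mathrm{G}_1^\top) = \texttt{Rank}_q(\mathrm{G}_1) - \ell$ and $\texttt{Rank}_q(\mathrm{H}_1\mathrm{G}_2^\top) = \texttt{Rank}_q(\mathrm{G}_2) - \ell$, become respectively $\texttt{Rank}_q(\mathrm{G}\mathrm{G}^\top) = \texttt{Rank}_q(\mathrm{G}) - \ell$ and $\texttt{Rank}_q(\mathrm{H}\mathrm{H}^\top) = \texttt{Rank}_q(\mathrm{H}) - \ell$, which is precisely the claimed equivalence. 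The ``or'' structure of the corollary matches the ``or'' in Theorem~\ref{th2} verbatim, so no further argument is needed.

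I expect the only genuine obstacle to be the bookkeeping of the roles of $\mathrm{G}$ and $\mathrm{H}$ for the dual code, and in particular the justification of $\mathrm{H}_2 = \mathrm{G}$ via $(C^\perp)^\perp = C$; this is the single place where the Frobenius hypothesis on $R$ is essential. Everything else is a direct substitution into Theorem~\ref{th2}, with no new computation required.
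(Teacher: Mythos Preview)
Your proposal is correct and follows exactly the route the paper intends: the sentence preceding the corollary already sets $C_2=C_1^\perp$ and notes $\texttt{Hull}(C_1)=C_1\cap C_1^\perp$, so the corollary is meant as an immediate specialization of Theorem~\ref{th2}. Your explicit justification of $\mathrm{H}_2=\mathrm{G}$ via $(C^\perp)^\perp=C$ over a Frobenius ring is a detail the paper leaves implicit but that is indeed what is being used.
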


Now, we recall the definition of the projective module as follows: an $R$-module $C$ is projective if there is an $R$-module $M$ such that $C\oplus M$ is a free $R$-module.

\begin{remark}\label{rem1}
Let $\mathrm{B_1}$ and $\mathrm{B_2}$ be $R$-modules. If
$\mathrm{B_1}\oplus \mathrm{B_2}$ is free, then $\mathrm{B_1}$ and $\mathrm{B_2}$ are projective.
\end{remark}

\begin{lemma}\cite[Theorem 2]{Kap58}\label{lem2}
Over chain rings, any projective module is free.
\end{lemma}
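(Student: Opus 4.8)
The plan is to reduce the whole statement to two applications of Nakayama's lemma, using the single structural feature we actually need: a finite chain ring $R$ is in particular a local ring, with maximal ideal $\textgoth{m}$ and residue field $\mathbb{F}_q$. Let $C$ be a projective $R$-module, so that by the definition recalled above there is an $R$-module $M$ with $C\oplus M$ free. Since $C\oplus M$ is free and, in the setting of this paper, of finite rank, the summand $C$ is finitely generated; this is all the finiteness I will use. The overall strategy is to build a surjection from a free module onto $C$ of the smallest possible rank, split it by projectivity, and then show that its kernel vanishes.

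First I would produce that surjection. The quotient $C/\textgoth{m}C$ is a finite-dimensional $\mathbb{F}_q$-vector space; set $d:=\dim_{\mathbb{F}_q}(C/\textgoth{m}C)$ and lift an $\mathbb{F}_q$-basis to elements $\textbf{c}_1,\dots,\textbf{c}_d\in C$. By Nakayama's lemma these elements generate $C$, so the assignment $\textbf{e}_i\mapsto\textbf{c}_i$ defines a surjective $R$-module homomorphism $\varphi:R^d\twoheadrightarrow C$. Because $C$ is projective, the short exact sequence $0\to\texttt{Ker}(\varphi)\to R^d\xrightarrow{\varphi}C\to 0$ splits, giving $R^d\simeq C\oplus K$ as $R$-modules, where $K:=\texttt{Ker}(\varphi)$; in particular $K$ is finitely generated, being a direct summand of $R^d$.

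The remaining step is to show $K=\{0\}$, which I would do by tensoring the splitting with the residue field. Since $-\otimes_R\mathbb{F}_q$ is additive and sends $R$ to $\mathbb{F}_q$ and any $R$-module $N$ to $N/\textgoth{m}N$, the isomorphism $R^d\simeq C\oplus K$ descends to $\mathbb{F}_q^d\simeq(C/\textgoth{m}C)\oplus(K/\textgoth{m}K)$ as $\mathbb{F}_q$-vector spaces. Comparing dimensions and using $\dim_{\mathbb{F}_q}(C/\textgoth{m}C)=d$ forces $K/\textgoth{m}K=\{0\}$, i.e. $K=\textgoth{m}K$. A second application of Nakayama's lemma to the finitely generated module $K$ then yields $K=\{0\}$, so $\varphi$ is an isomorphism and $C\simeq R^d$ is free.

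The main subtlety is that this argument uses only that $R$ is local, and never that its ideals form a chain; the chain hypothesis is in fact stronger than what is needed here, so the real content is just locality together with finite generation. The one genuinely delicate point would arise if $C$ were not finitely generated, since then one cannot lift a finite basis through $C/\textgoth{m}C$ and must instead appeal to Kaplansky's structure theorem for projective modules over local rings \cite[Theorem 2]{Kap58}. For the finitely generated codes of this paper, however, the two invocations of Nakayama's lemma displayed above suffice.
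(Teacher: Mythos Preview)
Your argument is correct. The paper does not supply its own proof of this lemma; it simply quotes Kaplansky \cite[Theorem~2]{Kap58} as an external reference and then invokes the statement in the proof of Theorem~\ref{th2.1}. So there is nothing to compare against beyond the bare citation.

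What you have written is the standard Nakayama-based proof that a finitely generated projective module over a local ring is free, and every step is sound: lifting an $\mathbb{F}_q$-basis of $C/\textgoth{m}C$ to a generating set of $C$, splitting the resulting surjection $R^d\twoheadrightarrow C$ by projectivity, and killing the kernel by a second application of Nakayama after reducing modulo $\textgoth{m}$. Your remark that only locality is used, not the chain condition, is accurate and worth making; the paper's phrasing ``over chain rings'' is strictly weaker than what the argument (and Kaplansky's theorem) actually give. Your caveat about the non--finitely-generated case is also well placed: Kaplansky's result handles that situation via a more delicate decomposition into countably generated summands, but for the linear codes in this paper---which are submodules of $R^n$ with $R$ finite---finite generation is automatic and your elementary proof suffices.
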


\begin{thm}\label{th2.1}
Let $C_i$ be a non-free linear code of length $n$ over $R$, for $i=1,2$ such that $C_1+C_2=R^n$. Then there always exists a non-zero $\ell$-DLIP of codes.
\end{thm}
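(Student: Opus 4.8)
The plan is to argue by contradiction, showing that the only way the pair $\{C_1, C_2\}$ could fail to be a \emph{non-zero} $\ell$-DLIP, i.e.\ the only way $\dim(C_1\cap C_2)$ could equal $0$, is for both $C_1$ and $C_2$ to be free, contrary to hypothesis. So the statement to prove amounts to $\dim(C_1\cap C_2)=\ell\geq 1$.

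First I would suppose, for contradiction, that $\{C_1,C_2\}$ is a $0$-DLIP, so that $\dim(C_1\cap C_2)=0$ and hence $C_1\cap C_2=\{\textbf{0}\}$. I would then reuse the $R$-module homomorphism $\phi\colon C_1\times C_2\to C_1+C_2$, $(\textbf{x},\textbf{y})\mapsto \textbf{x}+\textbf{y}$ from the proof of Proposition \ref{lm-1}. The hypothesis $C_1+C_2=R^n$ makes $\phi$ surjective onto $R^n$, while $\texttt{Ker}(\phi)\simeq C_1\cap C_2=\{\textbf{0}\}$ makes it injective. Thus $\phi$ is an $R$-module isomorphism, and consequently $C_1\oplus C_2\simeq R^n$ is a free $R$-module.

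Next I would invoke Remark \ref{rem1}: since $C_1\oplus C_2$ is free, both $C_1$ and $C_2$ are projective $R$-modules. Because $R$ is a finite local ring, every (finitely generated) projective $R$-module is free; this is exactly the content of Lemma \ref{lem2} in the chain-ring setting, and it persists over an arbitrary finite local ring by Nakayama's lemma. Hence $C_1$ and $C_2$ would both be free, contradicting the standing assumption that $C_1$ and $C_2$ are non-free. Therefore $C_1\cap C_2\neq\{\textbf{0}\}$, so $\ell=\dim(C_1\cap C_2)\geq 1$, and $\{C_1,C_2\}$ is a non-zero $\ell$-DLIP of codes, as claimed.

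The one delicate point is the passage from \emph{projective} to \emph{free}: the cited Lemma \ref{lem2} is phrased for chain rings, whereas the present theorem is stated over a finite local (Frobenius) ring $R$. I expect this to be the main, though modest, obstacle, and I would resolve it either by noting that $R$ is here the local factor of interest or by supplying the short Nakayama argument valid over any finite local ring, which guarantees that a finitely generated projective module over $R$ is free. Everything else is a direct assembly of Proposition \ref{lm-1} and Remark \ref{rem1}.
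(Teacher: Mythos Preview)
Your proposal is correct and follows essentially the same contradiction argument as the paper: assume $C_1\cap C_2=\{\textbf{0}\}$, deduce $C_1\oplus C_2=R^n$ is free, invoke Remark~\ref{rem1} to get projectivity, then conclude freeness to reach a contradiction. You are in fact slightly more careful than the paper, which cites Lemma~\ref{lem2} (stated only for chain rings) without comment; your observation that the projective-implies-free step over an arbitrary finite local ring is handled by Nakayama's lemma is a genuine improvement in rigor.
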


\begin{proof}
Let $ C_1\cap C_2=\{\,\textbf{0}\,\}$. Then we have $C_1\oplus C_2=R^n$, as $C_1+C_2=R^n$. Since $C_1\oplus C_2$ is free, then Remark \ref{rem1} implies that $C_1$ as well as $C_2$ are projective. Now, Lemma \ref{lem2} claims that $C_1$ as well as $C_2$ must be free, which is a contradiction to the fact that $C_1$ and $C_2$ are non-free. Therefore, $C_1\cap C_2\neq\{\,\textbf{0}\,\}$.
\end{proof}

\begin{cor}\label{th2.1} Let $\{C_1, C_2\}$ be a pair of linear codes over $R$ of length $n$ such that $C_1+C_2=R^n$. If $\{C_1, C_2\}$ is $0$-DLIP, then $C_1$ and $C_2$ are free.
\end{cor}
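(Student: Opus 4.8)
The plan is to read this corollary as the strengthened contrapositive of the preceding theorem: whereas that theorem shows that two non-free codes summing to $R^n$ must intersect non-trivially, here I want to conclude that a trivial intersection (together with the full sum) forces \emph{both} codes to be free. Rather than formally contraposing the theorem (which would only yield that \emph{at least one} of $C_1, C_2$ is free), I would run the same module-theoretic argument directly, since it delivers freeness of both summands simultaneously.

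First I would unpack the hypothesis that $\{C_1, C_2\}$ is $0$-DLIP. By Definition~\ref{def 2.1} this means $\dim(C_1\cap C_2)=0$; since the dimension was defined as $\log_Q(|C_1\cap C_2|)$, a dimension of $0$ forces $|C_1\cap C_2|=1$, that is, $C_1\cap C_2=\{\textbf{0}\}$. Combining this with the hypothesis $C_1+C_2=R^n$, the sum is internal and direct, so $C_1\oplus C_2=R^n$.

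Next I would feed this decomposition into the projectivity criterion. As $R^n$ is a free $R$-module, Remark~\ref{rem1} applies with $\mathrm{B_1}=C_1$ and $\mathrm{B_2}=C_2$, yielding that $C_1$ and $C_2$ are both projective $R$-modules. Finally, invoking Lemma~\ref{lem2}, every projective module over a chain ring is free, and hence $C_1$ and $C_2$ are both free, as claimed.

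I do not expect a genuine obstacle here; the content is essentially bookkeeping that mirrors the proof of the previous theorem. The two points deserving care are the step identifying $\dim=0$ with the honestly trivial intersection $C_1\cap C_2=\{\textbf{0}\}$ (legitimate because $\dim$ is the base-$Q$ logarithm of cardinality), and the applicability of Lemma~\ref{lem2}, whose hypothesis is that the ambient ring is a chain ring. Since the same lemma is already used in the preceding theorem, I would keep the corollary within that chain-ring setting, and I would flag this as the only structural assumption the argument silently relies on.
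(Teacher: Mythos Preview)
Your proposal is correct and mirrors the paper's own reasoning: the corollary has no separate proof in the paper, and your argument is exactly the module-theoretic core of the preceding theorem's proof (trivial intersection plus full sum gives a direct-sum decomposition of $R^n$, hence both summands are projective by Remark~\ref{rem1}, hence free by Lemma~\ref{lem2}), applied directly rather than by contradiction. Your observation that Lemma~\ref{lem2} is stated for chain rings while Section~3 works over local Frobenius rings is a legitimate caveat that the paper itself glosses over; Kaplansky's theorem in fact gives freeness of projective modules over any local ring, so the argument goes through in the generality the section claims.
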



By setting $\ell=0$, combining Lemma \ref{th2.00} and Corollary \ref{th2.1}, we have the following result.

\begin{cor}
Let $C_i$ be a free linear code of length $n$ over $R$  with
generator matrix $\mathrm{G}_i$ and parity check matrix
$\mathrm{H}_i,$ for $i=1,2$, such that $C_1+C_2=R^n$. Then the following statements are equivalent.
\begin{enumerate}
\item  $C_1$ and $C_2$ form LCP; \item  $\left(
{\begin{array}{ccc}
   \mathrm{G}_1 \\
   \mathrm{G}_2 \\
   \end{array} } \right)$ is invertible over $R$;
\item  $\left( {\begin{array}{ccc}
   \mathrm{H}_1 \\
   \mathrm{H}_2 \\
   \end{array} } \right)$ is invertible over $R$;
   \item $\texttt{Rank}_q(\mathrm{H}_2\mathrm{G}_1^\top)=\texttt{Rank}_q(\mathrm{G}_1)$
or
$\texttt{Rank}_q(\mathrm{H}_1\mathrm{G}_2^\top)=\texttt{Rank}_q(\mathrm{G}_2)$.
\end{enumerate}
\end{cor}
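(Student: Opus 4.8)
The plan is to route all four conditions through the single statement that $\{C_1,C_2\}$ is a $0$-DLIP, and then read off each equivalence from results already established. Two preliminary remarks organise everything. First, since $C_1+C_2=R^n$ is part of the hypothesis, a linear complementary pair is by definition a pair with $C_1\oplus C_2=R^n$, so statement (1) holds exactly when $C_1\cap C_2=\{\mathbf{0}\}$, i.e. when $\dim(C_1\cap C_2)=0$; thus (1) is literally the assertion that $\{C_1,C_2\}$ is a $0$-DLIP. Second, writing $k_i:=\texttt{Rk}_R(C_i)$, freeness gives $\texttt{Rank}_q(\mathrm{G}_i)=\dim(C_i)=k_i\omega$, while Proposition \ref{lm-1} together with $\dim(C_1+C_2)=n\omega$ forces $\dim(C_1\cap C_2)=(k_1+k_2-n)\omega$. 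Hence the $0$-DLIP condition is equivalent to $k_1+k_2=n$, which is precisely the squareness of the two stacked matrices appearing in (2) and (3).

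The equivalence (1)$\Leftrightarrow$(4) is then immediate: it is Theorem \ref{th2} specialised to $\ell=0$, which says that $\{C_1,C_2\}$ is a $0$-DLIP if and only if $\texttt{Rank}_q(\mathrm{H}_2\mathrm{G}_1^\top)=\texttt{Rank}_q(\mathrm{G}_1)$ or $\texttt{Rank}_q(\mathrm{H}_1\mathrm{G}_2^\top)=\texttt{Rank}_q(\mathrm{G}_2)$, word for word condition (4). For (1)$\Leftrightarrow$(2) I would feed the $0$-DLIP condition into Lemma \ref{th2.00}: with $\dim(C_1\cap C_2)=0$ and the freeness identities, the stacked matrix $\bigl(\begin{smallmatrix}\mathrm{G}_1\\\mathrm{G}_2\end{smallmatrix}\bigr)$ satisfies $\texttt{Rank}_q=(k_1+k_2)\omega=n\omega$ and, by $k_1+k_2=n$, is an $n\times n$ matrix of full $q$-rank; conversely an invertible matrix is square, forcing $k_1+k_2=n$, and full rank then forces $\dim(C_1\cap C_2)=0$ through the same lemma.

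The step I expect to be the main obstacle is the passage between ``$\texttt{Rank}_q=n\omega$'' and genuine invertibility over $R$, because $\texttt{Rank}_q$ is only the $q$-logarithm of the cardinality of the row space, not the familiar field rank. I would close this gap with the standard module-theoretic fact that, for a square matrix $M$ over the finite commutative ring $R$, $\texttt{Rank}_q(M)=n\omega$ says exactly that the row space is all of $R^n$, so the endomorphism $\mathbf{x}\mapsto\mathbf{x}M$ of $R^n$ is surjective, and a surjective endomorphism of the finite module $R^n$ is automatically bijective; hence $M$ is invertible. Corollary \ref{th2.1} enters here in the converse direction, guaranteeing that whenever one of the equivalent $0$-DLIP conditions holds the codes are indeed free, so no case escapes the standing free hypothesis.

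Finally I would obtain (1)$\Leftrightarrow$(3) by passing to duals. Over a Frobenius ring the dual $C_i^\perp$ of a free code is again free, with $\mathrm{H}_i$ a free generator matrix (the standard-form computation $[I_{k_i}\,|\,A]\mapsto[-A^\top\,|\,I_{n-k_i}]$), and $\{C_1,C_2\}$ is a complementary pair if and only if $\{C_1^\perp,C_2^\perp\}$ is, since $C_1^\perp\cap C_2^\perp=(C_1+C_2)^\perp$ and $C_1^\perp+C_2^\perp=(C_1\cap C_2)^\perp$ merely interchange the sum and intersection conditions. Applying the already-proven equivalence (1)$\Leftrightarrow$(2) to the free pair $\{C_1^\perp,C_2^\perp\}$, whose stacked generator matrix is $\bigl(\begin{smallmatrix}\mathrm{H}_1\\\mathrm{H}_2\end{smallmatrix}\bigr)$, yields (3). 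Alternatively, staying closer to the paper's toolkit, one can compute $\texttt{Rank}_q\bigl(\begin{smallmatrix}\mathrm{H}_1\\\mathrm{H}_2\end{smallmatrix}\bigr)$ directly from Remark \ref{th2.10}, using $\dim(C_1^\perp\cap C_2^\perp)=0$ (which follows from $(C_1+C_2)^\perp=\{\mathbf{0}\}$ together with Proposition \ref{th1.00}) and $\texttt{Rank}_q(\mathrm{H}_i)=(n-k_i)\omega$, and then invoke the same surjective-endomorphism argument to convert full rank into invertibility.
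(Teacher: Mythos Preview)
Your proof is correct and follows the same strategy the paper sketches: set $\ell=0$ and read everything off Lemma~\ref{th2.00}, Theorem~\ref{th2}, and the duality/Remark~\ref{th2.10} package. The paper itself gives only the one-line hint ``By setting $\ell=0$, combining Lemma~\ref{th2.00} and Corollary~\ref{th2.1}'', so you have supplied considerably more detail than the original---in particular the passage from $\texttt{Rank}_q=n\omega$ to genuine invertibility via the surjective-endomorphism-of-a-finite-module trick, and the explicit handling of (3) by dualising (or by Remark~\ref{th2.10}) and of (4) by Theorem~\ref{th2}, none of which the paper spells out.
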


\section{Constacyclic $\ell$-DLIP of codes over a finite chain ring}

In this section, $R$ is assumed to be a finite chain ring (FCR) of invariants $(q, t)$. Therefore, $R$ is a local ring with maximal ideal $\textgoth{m}$. Indeed, $\textgoth{m}$ is a principal ideal.
Thus, $\textgoth{m}=\langle \gamma\rangle,$ $R/\textgoth{m}\simeq \mathbb{F}_q$ and $\{0\}=\langle \gamma^t \rangle\subsetneq \langle \gamma^{t-1} \rangle \subsetneq \cdots \subsetneq \gamma \subsetneq \langle \gamma^0 \rangle =R$, where $t$ is called the nilpotency index of $\gamma$ and $\mathbb{F}_{q}$ is the residue field of $R$.

We introduce the mapping
$$\begin{array}{cccc}
  \pi: & R & \rightarrow & \mathbb{F}_q \\
    & r & \mapsto & r+ \textgoth{m},
\end{array}$$
which is a natural epimorphism from $R$ onto $\mathbb{F}_q.$ This mapping $\pi$ can be extended naturally to an epimorphism from $R^n$ to $\mathbb{F}_q^n$. For a given unit $\lambda$ in $R$, a linear code $C$ of length $n$ over $R$ is $\lambda$-constacyclic if $(\lambda c_{n-1}, c_0, c_1, \cdots, c_{n-2}) \in C$ whenever $(c_0, c_1,\cdots, c_{n-1}) \in C.$ The $1_R$-constacyclic codes are called cyclic codes and $(-1_R)$-constacyclic codes are called negacyclic codes. It is well known that the image of a $\lambda$-constacyclic code of length $n$ over $R$ under the $R$-isomorphism

\begin{align*}
\begin{array}{cccc}
  \Psi : & R^n & \rightarrow & R[X]/\langle X^n-\lambda\rangle \\
    & (c_0, c_1,\cdots, c_{n-1}) & \mapsto & c_0+ c_1x+\cdots+
    c_{n-1}x^{n-1},
\end{array}
\end{align*}
is an ideal of the residue class ring $\frac{R[x]}{\langle
x^n-\lambda\rangle}.$ Now, for each $a(x):=\sum\limits_{i=0}^s a_ix^i$ with $a_{0}$ invertible in $R$, the reciprocal of $a(x)$ is denoted by $a^*(x)$ and is defined by $a^*(x)=x^sa_0^{-1}\sum\limits_{i=0}^sa_ix^{-i}$. If
$a(x)=a^*(x)$, then $a(x)$ is called self-reciprocal. Now, we focus on $\lambda$-constacyclic $\ell$-DLIP of codes over $R$. Our next result on non-free $\lambda$-constacyclic over $R$ follows from \cite[Theorem 3.14]{GG12}.

\begin{lemma}\label{$l-intersection$, th5.1}
Let $C$ be a $\lambda$-constacyclic code of length $n$ over $R$. Then there exists a unique $(t+1)$-tuple $(F_0,\dots, F_t)$ of pairwise coprime polynomials in $R[x]$ satisfying $F_0\cdots F_t=x^n-\lambda$, such that
$$\Psi(C)=\left\langle\left\{\gamma^{t}\widehat{\,F_{j+1}}\;:\;0\leq j < t\right\}\right\rangle \text{~and~}\Psi(C^\perp)=\left\langle\left\{\widehat{\,F_0\;}^*\right\}\cup\left\{\gamma^{j}\widehat{\,F_{t-j+1}}^*\;:\;1\leq j \leq t-1\right\}\right\rangle,$$ where $\widehat{\,F_i\;}=\frac{x^n-\lambda}{F_i}$, for $0\leq i\leq t$.
 Moreover, $|C|= q^{\sum\limits_{j=0}^{t-1}{(t-j)}\deg(F_{j+1})}~\mbox{and}~|C^\perp|= q^{\sum\limits_{j=1}^{t}{j}\deg(F_{j+1})}$ with $F_{t+1}=F_0$.
\end{lemma}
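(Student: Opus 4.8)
The plan is to deduce everything from the structure theory of the quotient ring $\mathcal{A}:=R[x]/\langle x^n-\lambda\rangle$ in the spirit of \cite[Theorem 3.14]{GG12}. Throughout I work, as is standard in the simple-root setting underlying that theorem, under the inherited hypothesis $\gcd(n,q)=1$, which ensures that the reduction of $x^n-\lambda$ is squarefree over $\mathbb{F}_q$. First I would Hensel-lift the factorization: over $\mathbb{F}_q$ the polynomial $x^n-\lambda$ splits into pairwise coprime monic irreducibles, and this lifts uniquely to $x^n-\lambda=f_1f_2\cdots f_r$ with the $f_i$ pairwise coprime basic irreducible polynomials in $R[x]$. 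The Chinese Remainder Theorem (the map $\Phi$ of Section~\ref{sec:2} applied to the coprime ideals $\langle f_i\rangle$) then gives $\mathcal{A}\simeq\bigoplus_{i=1}^{r}R[x]/\langle f_i\rangle$, where each factor $R[x]/\langle f_i\rangle$ is itself a finite chain ring of nilpotency index $t$, residue field $\mathbb{F}_{q^{\deg f_i}}$, and maximal ideal generated by the image of $\gamma$.

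Next I would classify $C$ componentwise. Since each $R[x]/\langle f_i\rangle$ is a chain ring, its only ideals are $\langle\gamma^{k}\rangle$ with $0\le k\le t$, so $\Psi(C)$ corresponds under the isomorphism to a unique tuple $(k_1,\dots,k_r)\in\{0,\dots,t\}^r$, the $i$-th component being $\langle\gamma^{k_i}\rangle$. Grouping factors by exponent, I set $F_{j+1}:=\prod_{k_i=j}f_i$ for $0\le j\le t-1$ and $F_0:=\prod_{k_i=t}f_i$; these are pairwise coprime with $F_0F_1\cdots F_t=x^n-\lambda$, and uniqueness of the tuple together with uniqueness of the Hensel lift yields uniqueness of $(F_0,\dots,F_t)$. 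To recover the generating set I would note that $\widehat{F_{j+1}}=(x^n-\lambda)/F_{j+1}$ reduces to a \emph{unit} in every component with $k_i=j$ (there $f_i\mid F_{j+1}$ but $f_i\nmid\widehat{F_{j+1}}$, and coprimality mod $\textgoth{m}$ forces invertibility) and to $0$ elsewhere; hence $\gamma^{j}\widehat{F_{j+1}}$ carries component $\langle\gamma^{j}\rangle$ exactly on the factors with $k_i=j$, and the ideal generated by $\{\gamma^{j}\widehat{F_{j+1}}:0\le j<t\}$ reassembles $\bigoplus_i\langle\gamma^{k_i}\rangle=\Psi(C)$. The cardinality then follows: since $|\langle\gamma^{k_i}\rangle|=q^{(t-k_i)\deg f_i}$, multiplying over $i$ and regrouping gives $|C|=\prod_{j=0}^{t}q^{(t-j)\deg F_{j+1}}=q^{\sum_{j=0}^{t-1}(t-j)\deg F_{j+1}}$, the $j=t$ (i.e. $F_0$) term contributing nothing.

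For the dual I would use that $C^\perp$ is $\lambda^{-1}$-constacyclic, with quotient ring $R[x]/\langle x^n-\lambda^{-1}\rangle$ and factorization obtained via the reciprocal map: since $(x^n-\lambda)^*=x^n-\lambda^{-1}$ and reciprocation is multiplicative up to units, one has $x^n-\lambda^{-1}\doteq\prod f_i^{*}$ with the $f_i^{*}$ again pairwise coprime basic irreducibles and $\widehat{F_{j+1}}^{*}=(x^n-\lambda^{-1})/F_{j+1}^{*}$. The key componentwise input is that, for the standard inner product, the annihilator of $\langle\gamma^{k}\rangle$ in a chain-ring component is $\langle\gamma^{t-k}\rangle$ (this is where the Frobenius/self-injective structure of $R$ enters, carried across by the reciprocal pairing). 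Translating each exponent $k_i$ into the dual exponent $t-k_i$ and regrouping by the polynomials $F_{t-j+1}^{*}$ produces generators $\gamma^{j}\widehat{F_{t-j+1}}^{*}$ for dual exponent $j$; the value $j=0$ gives $\widehat{F_0}^{*}$ (with $F_{t+1}:=F_0$), the values $1\le j\le t-1$ give the remaining generators, and $j=t$ yields $\gamma^{t}(\cdots)=0$ and is dropped. The size formula $|C^\perp|=\prod_i q^{k_i\deg f_i}=q^{\sum_{j=1}^{t}j\deg F_{j+1}}$ follows identically, and one checks $|C|\,|C^\perp|=q^{tn}=|R|^n$, in agreement with Eq.\,\ref{eq*}.

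The step I expect to be the main obstacle is the dual computation, not the primal part. Concretely, one must carefully justify that the reciprocal map carries the canonical factorization of $x^n-\lambda$ to that of $x^n-\lambda^{-1}$ while preserving pairwise coprimality, and that it intertwines the annihilator $\langle\gamma^{k}\rangle\mapsto\langle\gamma^{t-k}\rangle$ on each chain-ring component. The genuinely delicate point is the bookkeeping: turning the exponent reflection $k\mapsto t-k$ and the relabeling $F_{j+1}\mapsto F_{t-j+1}^{*}$ into exactly the index ranges appearing in the statement. By contrast, once Hensel lifting and the CRT decomposition are in place, the generating set for $C$ and the formula for $|C|$ are routine.
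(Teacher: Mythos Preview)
The paper does not supply its own proof of this lemma; it simply records that the statement ``follows from \cite[Theorem 3.14]{GG12}.'' Your proposal is a correct and essentially complete unpacking of exactly that argument---Hensel lifting of the squarefree factorization under $\gcd(n,q)=1$, the CRT splitting of $R[x]/\langle x^n-\lambda\rangle$ into chain-ring factors $R[x]/\langle f_i\rangle$, componentwise classification of ideals by the exponent $k_i$, regrouping into the $F_{j}$, and dualization via reciprocation together with the annihilator swap $\langle\gamma^{k}\rangle\leftrightarrow\langle\gamma^{t-k}\rangle$---so there is nothing to contrast. You have also silently repaired the evident typo in the displayed generators for $\Psi(C)$: the exponent should be $\gamma^{j}$, not $\gamma^{t}$ (as the paper itself writes $\gamma^{j}\widehat{F_{i,j}}$ in the very next theorem).
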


Our next result about the intersection of two $\lambda$-constacyclic codes over $R$ follows from Lemma
\ref{$l-intersection$, th5.1}, the division algorithm theorem and the notion of least common multiple.

\begin{thm}\label{sec4thm}
Let $C_i$ be a $\lambda$-constacyclic code over $R$ such that $$\Psi(C_i)=\langle\{\gamma^j\widehat{\,F_{i,j}\;}\;:\;0\leq j <t\}\rangle,$$ where ${F}_{i,0}{F}_{i,1}\cdots{F}_{i,t}=x^n-\lambda$, for $i=1,2$. Then
\begin{align}
\Psi(C_1\cap
C_2)=\left\langle\left\{\gamma^{j}\texttt{lcm}(\widehat{\,F_{1,j}\;},\widehat{\,F_{2,j}\;})\;:\;
0\leq j<t\right\}\right\rangle.
\end{align}
Moreover,\; $\dim(C_1\cap
C_2)=\sum\limits_{j=0}^{t-1}(t-j)(n-\deg(\texttt{lcm}(\widehat{\,F_{1,j}\;},\widehat{\,F_{2,j}\;}))).$
\end{thm}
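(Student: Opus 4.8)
The plan is to move the whole problem into the quotient ring $\mathcal{A}:=R[x]/\langle x^n-\lambda\rangle$ through the $R$-module isomorphism $\Psi$. Since $\Psi$ is a bijective $R$-module homomorphism, $\Psi(C_1\cap C_2)=\Psi(C_1)\cap\Psi(C_2)$, so it suffices to intersect the two ideals $\Psi(C_i)=\langle\{\gamma^j\widehat{F_{i,j}}:0\le j<t\}\rangle$ and to exhibit a generating set of the intersection. The value of $\dim(C_1\cap C_2)$ will then be read off from the cardinality formula already recorded in Lemma \ref{$l-intersection$, th5.1}, so the core of the work is the generating-set identity.

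First I would set up the polynomial arithmetic. Each $F_{i,j}$ is a product of some of the pairwise coprime factors of $x^n-\lambda$, and every such factor divides exactly one $F_{i,j}$; consequently, for a fixed level $j$, the polynomials $\widehat{F_{1,j}}=(x^n-\lambda)/F_{1,j}$ and $\widehat{F_{2,j}}=(x^n-\lambda)/F_{2,j}$ are built from distinct coprime factors, so $\texttt{lcm}(\widehat{F_{1,j}},\widehat{F_{2,j}})=(x^n-\lambda)/\gcd(F_{1,j},F_{2,j})$ and the usual identity $\langle a\rangle\cap\langle b\rangle=\langle\texttt{lcm}(a,b)\rangle$ holds among divisors of $x^n-\lambda$. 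The inclusion $\Psi(C_1\cap C_2)\supseteq\langle\{\gamma^j\texttt{lcm}(\widehat{F_{1,j}},\widehat{F_{2,j}}):0\le j<t\}\rangle$ is the easy direction: writing $\texttt{lcm}(\widehat{F_{1,j}},\widehat{F_{2,j}})$ as a polynomial multiple of $\widehat{F_{1,j}}$ and of $\widehat{F_{2,j}}$ displays $\gamma^j\texttt{lcm}(\widehat{F_{1,j}},\widehat{F_{2,j}})$ as a multiple of a generator of each code at the same $\gamma$-level $j$.

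The reverse inclusion is where I expect the real difficulty, and it is the step I would organize most carefully. Taking a general element of $\Psi(C_1)\cap\Psi(C_2)$ and expanding it $\gamma$-adically, I would peel off its contributions level by level from $j=0$ upward, at each stage using the division algorithm to reduce the polynomial part modulo $\widehat{F_{1,j}}$ and $\widehat{F_{2,j}}$; simultaneous membership should force the surviving part to be a common multiple, hence a multiple of $\texttt{lcm}(\widehat{F_{1,j}},\widehat{F_{2,j}})$. The obstacle is that the nilpotent $\gamma$ couples the levels: a contribution intended for level $j$ can be masked or altered by $\gamma^{>j}$ terms, so one must carry the remainders across levels and verify that exactly $\gamma^j\texttt{lcm}(\widehat{F_{1,j}},\widehat{F_{2,j}})$ is what is attributed to level $j$. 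To keep this honest I would pass to the $\texttt{CRT}$ decomposition $\mathcal{A}\cong\prod_k R[x]/\langle g_k\rangle$ into local chain rings, where each $\Psi(C_i)$ becomes a single power $\langle\gamma^{a_{i,k}}\rangle$ and the intersection becomes $\langle\gamma^{\max(a_{1,k},a_{2,k})}\rangle$; matching these exponents against the images of the proposed generators reduces the whole identity to a bookkeeping check on the exponents $a_{1,k},a_{2,k}$, which is the delicate heart of the argument. Once the generating set is settled, the intersection is again $\lambda$-constacyclic with level-$j$ factor $\gcd(F_{1,j},F_{2,j})$ of degree $n-\deg\texttt{lcm}(\widehat{F_{1,j}},\widehat{F_{2,j}})$; feeding these degrees, weighted by $t-j$, into the cardinality formula of Lemma \ref{$l-intersection$, th5.1} and taking $\log_q$ yields the stated value of $\dim(C_1\cap C_2)$.
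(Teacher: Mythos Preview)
Your proposal is aligned with the paper's own treatment: the paper does not give a detailed argument but simply states that the result ``follows from Lemma~\ref{$l-intersection$, th5.1}, the division algorithm theorem and the notion of least common multiple,'' and your plan expands precisely these ingredients (the $\gamma$-adic peeling is the division-algorithm step, and the $\texttt{CRT}$-localization is a natural way to make the lcm bookkeeping rigorous). The dimension formula you derive by feeding the level-$j$ degrees into the cardinality count of Lemma~\ref{$l-intersection$, th5.1} is exactly what the paper intends.
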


For better understanding, we discuss an example over the ring $\mathbb{Z}_4$ below.

\begin{example}
We know that $x^7+1=(x+1)(x^3-2x^2+x+1)(x^3+x^2+2x+1)$ over
$\mathbb{Z}_4$. Let $f(x)=x^3+x^2+2x+1$, $g(x)=x^3-2x^2+x+1$ and $h(x)=x+1$ which are basic irreducible polynomials. Let $C_1$ and $C_2$ be two negacyclic codes of length $7$ over $\mathbb{Z}_4$.
Notice that $$\Psi(C_1)=\langle f(x)g(x), 2f(x) \rangle\text{ and }\Psi(C_2)=\langle h(x)f(x), 2h(x) \rangle$$ are also negacyclic codes of length $7$ over $\mathbb{Z}_4$. Therefore,
\begin{eqnarray*}
   \Psi(C_1\cap C_2) &=&  \langle \texttt{lcm}(f(x)g(x), h(x)f(x)), 2\texttt{lcm}(f(x),
h(x)) \rangle; \\
    &=& \langle 2(x+1)(x^3+x^2+2x+1) \rangle,
\end{eqnarray*}
Therefore, $\{C_1, C_2\}$ is a $3$-DLIP of codes.
\end{example}


\begin{cor}\label{sec4cor}
Let $C_i$ be a free $\lambda$-constacyclic code over $R$ such that $\Psi(C_i)=\langle F_i\rangle$, where $F_i$ divides $x^n-\lambda$, for $i=1,2$. Then  $C_1+C_2=R^n$ if and only if $\texttt{lcm}(F_1,F_2)=F_1F_2.$ Moreover, $C_1$ and $C_2$ form an LCP if and only if $\deg(F_1F_2)=n$.
\end{cor}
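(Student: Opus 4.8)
The plan is to read the requirement ``$C_1$ and $C_2$ form an LCP'' through its defining property $C_1\oplus C_2=R^n$, i.e. simultaneously $\dim(C_1\cap C_2)=0$ (the pair is $0$-DLIP) and $C_1+C_2=R^n$, and to make both $\dim(C_1\cap C_2)$ and $\dim(C_1+C_2)$ explicit in terms of $\deg F_1$ and $\deg F_2$. The key input is Theorem~\ref{sec4thm}, which I first specialize to the free case. Writing $\widehat{F_i}=(x^n-\lambda)/F_i$, the free code $\Psi(C_i)=\langle F_i\rangle$ is recovered in the notation of Theorem~\ref{sec4thm} by taking $F_{i,0}=\widehat{F_i}$, $F_{i,1}=\cdots=F_{i,t-1}=1$ and $F_{i,t}=F_i$, so that $\widehat{F_{i,0}}=F_i$ while $\widehat{F_{i,j}}=x^n-\lambda\equiv 0$ for $1\le j\le t-1$. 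Substituting this into Theorem~\ref{sec4thm} collapses every term with $j\ge 1$ and leaves $\Psi(C_1\cap C_2)=\langle\,\texttt{lcm}(F_1,F_2)\,\rangle$, again a free constacyclic code; its dimension formula reduces to $\dim(C_1\cap C_2)=\omega\bigl(n-\deg\,\texttt{lcm}(F_1,F_2)\bigr)$, using $\omega=t$ for a chain ring. Likewise each free code satisfies $\dim(C_i)=\omega(n-\deg F_i)$.

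For the first equivalence I would feed these quantities into Proposition~\ref{lm-1}, obtaining
\[
\dim(C_1+C_2)=\dim(C_1)+\dim(C_2)-\dim(C_1\cap C_2)=\omega\bigl(n-\deg F_1-\deg F_2+\deg\,\texttt{lcm}(F_1,F_2)\bigr).
\]
Invoking the degree identity $\deg\,\texttt{lcm}(F_1,F_2)+\deg\gcd(F_1,F_2)=\deg F_1+\deg F_2$ (legitimate because, under the standing coprimality $\gcd(n,q)=1$, the monic divisors $F_1,F_2$ of $x^n-\lambda$ are products of subsets of its pairwise coprime basic-irreducible factors), this simplifies to $\dim(C_1+C_2)=\omega\bigl(n-\deg\gcd(F_1,F_2)\bigr)$. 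Since $\dim(R^n)=n\omega$, we get $C_1+C_2=R^n$ iff $\deg\gcd(F_1,F_2)=0$, i.e. $\gcd(F_1,F_2)=1$, which is exactly $\texttt{lcm}(F_1,F_2)=F_1F_2$; this settles the first statement.

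For the ``moreover'' part I would argue that, an LCP being by definition a pair with $C_1\oplus C_2=R^n$, the pair is an LCP iff $\dim(C_1\cap C_2)=0$ and $C_1+C_2=R^n$ hold together. By the displayed computation $\dim(C_1\cap C_2)=0$ is equivalent to $\deg\,\texttt{lcm}(F_1,F_2)=n$ (which, since $\texttt{lcm}(F_1,F_2)\mid x^n-\lambda$ and both are monic, forces $\texttt{lcm}(F_1,F_2)=x^n-\lambda$), while $C_1+C_2=R^n$ is equivalent to $\gcd(F_1,F_2)=1$ by the first part. Under $\gcd(F_1,F_2)=1$ one has $\texttt{lcm}(F_1,F_2)=F_1F_2$, so the intersection condition becomes $\deg(F_1F_2)=n$, and conversely $\deg(F_1F_2)=n$ together with $\gcd(F_1,F_2)=1$ returns $\deg\,\texttt{lcm}(F_1,F_2)=n$. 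Hence, among pairs satisfying $C_1+C_2=R^n$, being an LCP is equivalent to $\deg(F_1F_2)=n$, the asserted characterization. The main obstacle I anticipate is precisely this interplay: the degree condition $\deg(F_1F_2)=n$ records only the dimension balance $\dim(C_1)+\dim(C_2)=n\omega$ and does \emph{not} by itself force $\gcd(F_1,F_2)=1$ (take $F_1=F_2$ of degree $n/2$), so the complementarity $C_1+C_2=R^n$ of the first part must be kept in force, and one must justify that the polynomial $\gcd$/$\texttt{lcm}$ degree identity is valid over the chain ring $R$ rather than merely over its residue field $\mathbb{F}_q$.
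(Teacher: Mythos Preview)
Your proposal is correct and follows essentially the same route as the paper: specialize Theorem~\ref{sec4thm} to the free case to get $\dim(C_1\cap C_2)=t\bigl(n-\deg\texttt{lcm}(F_1,F_2)\bigr)$, combine with $\dim(C_i)=t(n-\deg F_i)$ via Proposition~\ref{lm-1}, and read off the conditions. The paper's own proof is in fact terser---it only writes out the forward direction of the first equivalence and does not separately argue the ``moreover'' clause---so your treatment is more complete, and your closing caveat is well taken: the condition $\deg(F_1F_2)=n$ characterizes LCP only under the standing hypothesis $C_1+C_2=R^n$ (equivalently $\gcd(F_1,F_2)=1$), exactly as you note.
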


\begin{proof} We have $\dim(C_i)=t(n-\deg( F_i))$. From Proposition \ref{lm-1}, we get $$\dim(C_1+
C_2)=\dim(C_1)+\dim(C_2)-\dim(C_1\cap C_2).$$ Since $C_1+
C_2=R^n$, it follows that $\dim(C_1+ C_2)=tn$. Thus, $\dim(C_1\cap C_2)=t(n-\deg(F_1)-\deg(F_2))$. By Theorem \ref{sec4thm}, $\dim(C_1\cap C_2)=t(n-\deg(\texttt{lcm}(F_1,F_2)))$. Hence $\deg(\texttt{lcm}(F_1,F_2))=\deg(F_1)+\deg(F_2).$
\end{proof}

Let $C_1$ be a $\lambda_1$-constacyclic code and $C_2$ be a $\lambda_2$-constacyclic code, where $\pi{(\lambda_1)}\neq\pi{(\lambda_2)}$. With this assumption, we propose our next result.

\begin{prop}\label{$l-intersection$, lm5.1}
Let $C_i$ be a free $\lambda_i$-constacyclic code over $R$ of length $n$, for $i=1,2$, with $\pi{(\lambda_1)}\neq\pi{(\lambda_2)}$ such that $C_1\cap C_2$ is both $\lambda_1$ and $\lambda_2$-constacyclic. Then $C_1\cap
C_2= R^n$ or $C_1\cap C_2=\{\,\textbf{0}\,\}$.
\end{prop}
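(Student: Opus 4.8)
The plan is to prove the dichotomy by showing that if $D:=C_1\cap C_2$ is nonzero, then it must be all of $R^n$. Throughout I write $\sigma_\lambda(c_0,\dots,c_{n-1})=(\lambda c_{n-1},c_0,\dots,c_{n-2})$ for the $\lambda$-constacyclic shift, and I use the hypothesis that $D$ is invariant under both $\sigma_{\lambda_1}$ and $\sigma_{\lambda_2}$. First I would extract a ``scalar'' codeword: for any $\mathbf c=(c_0,\dots,c_{n-1})\in D$, both $\sigma_{\lambda_1}(\mathbf c)$ and $\sigma_{\lambda_2}(\mathbf c)$ lie in $D$, so their difference $\bigl((\lambda_1-\lambda_2)c_{n-1},0,\dots,0\bigr)$ lies in $D$ as well. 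Since $\pi(\lambda_1)\neq\pi(\lambda_2)$, the element $\lambda_1-\lambda_2$ lies outside the maximal ideal $\textgoth{m}$ and is therefore a unit of the local ring $R$; as $D$ is an $R$-module, multiplying by its inverse gives $(c_{n-1},0,\dots,0)\in D$. Hence for every $\mathbf c\in D$ the vector with final coordinate $c_{n-1}$ in position $0$ and zeros elsewhere belongs to $D$.

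Next, assuming $D\neq\{\mathbf 0\}$, I would produce a nonzero constant lying in both codes. Choose $\mathbf c\in D$ with some $c_j\neq 0$; applying $\sigma_{\lambda_1}^{\,n-1-j}$ (legitimate since $D$ is $\lambda_1$-constacyclic) moves $c_j$ into the last coordinate, so by the previous step $a\,\mathbf e_0:=(a,0,\dots,0)\in D$ with $a:=c_j\neq 0$, where $\mathbf e_0$ is the first standard basis vector. Because $D\subseteq C_1$ and $D\subseteq C_2$, the constant polynomial $a=\Psi(a\,\mathbf e_0)$ belongs simultaneously to $\Psi(C_1)=\langle F_1\rangle$ and $\Psi(C_2)=\langle F_2\rangle$, where by Corollary~\ref{sec4cor} each $F_i$ is a monic divisor of $x^n-\lambda_i$.

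The crux is then a degree argument forcing each $C_i=R^n$. Writing $a=p(x)F_i(x)$ and using that the free code $C_i$ admits the explicit $R$-basis $\{F_i,xF_i,\dots,x^{\,n-\deg F_i-1}F_i\}$, this identity already holds in $R[x]$ with both sides of degree $<n$, so no reduction modulo $x^n-\lambda_i$ intervenes. As $F_i$ is monic, $\deg(pF_i)=\deg p+\deg F_i$ whenever $p\neq 0$. Since the left-hand side is the nonzero constant $a$, we must have $p\neq 0$ and $\deg p+\deg F_i=0$, forcing $\deg F_i=0$ and hence $F_i=1$; therefore $C_i=R^n$. Applying this to $i=1$ and $i=2$ gives $C_1=C_2=R^n$, so $D=R^n$, which together with the trivial alternative establishes the claim.

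I expect the degree argument of the last paragraph to be the main obstacle, since it is precisely where freeness is indispensable: for a general (possibly non-free) constacyclic code the constant $a$ could be a power $\gamma^s$ sitting inside $\langle F_i\rangle$ without collapsing $F_i$ to $1$. The care required is to justify that membership of $a$ in the ideal $\langle F_i\rangle$ translates into a genuine bounded-degree polynomial identity (via the free basis) and that the leading coefficient of $F_i$ is a unit so that degrees add; by contrast, the subtraction trick and the unit status of $\lambda_1-\lambda_2$ are comparatively routine.
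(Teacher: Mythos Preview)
Your argument is correct. Both your proof and the paper's hinge on the same subtraction trick---forming $\sigma_{\lambda_1}(\mathbf c)-\sigma_{\lambda_2}(\mathbf c)$ to isolate a single coordinate and using that $\lambda_1-\lambda_2$ is a unit---but the two diverge in how freeness enters. The paper first asserts that $C_1\cap C_2$ is free (because $C_1,C_2$ are), which lets it pick a codeword in $D$ with a \emph{unit} coordinate; one subtraction then yields $(1,0,\dots,0)\in D$ and the conclusion is immediate. You bypass any claim about freeness of $D$: you only obtain $(a,0,\dots,0)\in D$ for some nonzero (possibly non-unit) $a$, and then exploit the freeness of each $C_i$ \emph{individually} through the monic generator $F_i$ and the degree argument to force $F_i=1$. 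Your route is a little longer but more self-contained, since it does not rely on the lemma that an intersection of free constacyclic codes (for different units $\lambda_i$) is again free---a step the paper states without justification. A minor point: the fact that a free $\lambda$-constacyclic code is generated by a monic divisor of $x^n-\lambda$ is the specialization of Lemma~\ref{$l-intersection$, th5.1} rather than Corollary~\ref{sec4cor}, so you may want to adjust that citation.
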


\begin{proof} Assume that $C_1\cap C_2\neq\{\,\textbf{0}\,\}$. Since the codes $C_1$ and $C_2$ are free, $C_1\cap C_2$ is also free. Thus, there exists $\textbf{x}:=(x_1, x_2, \cdots, x_n)$ in $C_1\cap C_2$ such that $x_1$ is an invertible element in $R$. Denote $\tau_i(\textbf{x})$ as
the $\lambda_i$-constashift of $\textbf{x}$. Then
$\tau_i(\textbf{x})=(\lambda_i x_n, x_1, \cdots, x_{n-1})$. Since $C_1\cap C_2$ is both $\lambda_1$ and $\lambda_2$-constacyclic, it follows that $\tau_1(\textbf{x})-\tau_2(\textbf{x})=((\lambda_1-\lambda_2)x_n,
0,\cdots, 0)\in C_1\cap C_2$. Since $\pi{(\lambda_1)}\neq\pi{(\lambda_2)}$, it deduces that $\lambda_1-\lambda_2$ is an invertible element in $R$, and so $(1,0,\cdots,0)\in C_1\cap C_2$. Hence $C_1\cap C_2= R^n$.
\end{proof}

\begin{remark} \label{rem4}
Let $\lambda_1$ and $\lambda_2$ be two invertible elements in $R$ such that $\pi{(\lambda_1)}\neq\pi{(\lambda_2)}$. Let $C_i$ be a free $\lambda_i$-constacyclic code over $R$ of length $n$, for $i=1,2,$ such that $C_1\cap C_2$ is both $\lambda_1$ and $\lambda_2$-constacyclic and $C_1+C_2=R^n$. Then $\{C_1, C_2\}$ is an LCP.
\end{remark}

If we set $C_1=C$ and $C_2=C^\perp$, then $C$ and $C^\perp$ form a free $\lambda$-constacyclic and free $\lambda^{-1}$-constacyclic codes over $R$, respectively. We have the following result in this direction using Corollary \ref{sec4cor} and Remark \ref{rem4}.

\begin{cor}
Let  $C$ be a free $\lambda$-constacyclic code over $R$ such that $\Psi(C)=\langle F\rangle$ where $F$ divides $ x^n-\lambda$, with $\pi(\lambda^2)=1$. Then $C$ forms an LCD code if and only if $F(x)$ is self-reciprocal.
\end{cor}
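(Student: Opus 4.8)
The plan is to reduce the statement modulo the maximal ideal $\textgoth{m}=\langle\gamma\rangle$ and settle it over the residue field $\mathbb{F}_q$, where Corollary \ref{sec4cor} applies. Write $x^n-\lambda=FG$, so that $G$ is the check polynomial of $C$; then $C$ is free $\lambda$-constacyclic with $\Psi(C)=\langle F\rangle$, while $\Psi(C^\perp)=\langle G^*\rangle$ is free $\lambda^{-1}$-constacyclic and, reciprocating the identity, $x^n-\lambda^{-1}=F^*G^*$. Recall that $C$ is LCD exactly when $\texttt{Hull}(C)=C\cap C^\perp=\{\textbf{0}\}$. The first step is to transfer this to $\mathbb{F}_q$: applying the corollary characterizing $\dim(\texttt{Hull}(C))$ through $\texttt{Rank}_q(\mathrm{G}\mathrm{G}^\top)$ to a generator matrix $\mathrm{G}$ of the free code $C$, one sees that $C$ is LCD iff $\texttt{Rank}_q(\mathrm{G}\mathrm{G}^\top)=\texttt{Rank}_q(\mathrm{G})=\dim(C)$, i.e.\ iff the square matrix $\mathrm{G}\mathrm{G}^\top$ has full $q$-rank, equivalently is invertible over $R$. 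Since its determinant is a unit iff it reduces to a nonzero element, this holds iff $\pi(\mathrm{G})\pi(\mathrm{G})^\top$ is invertible over $\mathbb{F}_q$, that is, iff $\pi(C)$ is LCD over $\mathbb{F}_q$.

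Next I would invoke the hypothesis $\pi(\lambda^2)=1$. Setting $\bar\lambda:=\pi(\lambda)$ gives $\bar\lambda^2=1$ and $\pi(\lambda^{-1})=\bar\lambda^{-1}=\bar\lambda$, so $\pi(C)=\langle\pi(F)\rangle$ and $\pi(C^\perp)=\langle\pi(G)^*\rangle$ (reciprocation commutes with $\pi$, as the constant term of $G$ is a unit) are \emph{both} $\bar\lambda$-constacyclic over $\mathbb{F}_q$, each generated by a divisor of $x^n-\bar\lambda$, and the latter polynomial is self-reciprocal because $(x^n-\bar\lambda)^*=x^n-\bar\lambda^{-1}=x^n-\bar\lambda$. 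As $C$ is free, $\pi(C)^\perp=\pi(C^\perp)$, so over $\mathbb{F}_q$ the relevant pair is $\{\pi(C),\pi(C)^\perp\}=\{\langle\pi(F)\rangle,\langle\pi(G)^*\rangle\}$, two $\bar\lambda$-constacyclic codes inside $\mathbb{F}_q[x]/\langle x^n-\bar\lambda\rangle$, to which Corollary \ref{sec4cor} applies directly (a field being a chain ring of invariants $(q,1)$). This is precisely the regime complementary to Remark \ref{rem4}: there the condition $\pi(\lambda_1)\neq\pi(\lambda_2)$ forced an LCP, whereas here $\pi(\lambda^{-1})=\pi(\lambda)$ keeps both reduced codes in a single ambient quotient.

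Since $\dim(\pi(C))+\dim(\pi(C)^\perp)=n$, the reduced code is LCD iff $\pi(C)+\pi(C)^\perp=\mathbb{F}_q^{\,n}$, and Corollary \ref{sec4cor} turns this into the coprimality condition $\texttt{lcm}(\pi(F),\pi(G)^*)=\pi(F)\pi(G)^*$, i.e.\ $\gcd(\pi(F),\pi(G)^*)=1$. It remains to identify this with self-reciprocity of $\pi(F)$. Assuming $\gcd(n,q)=1$, as in the simple-root setting underlying Lemma \ref{$l-intersection$, th5.1}, the polynomial $x^n-\bar\lambda$ is a product of pairwise distinct irreducible factors, on which reciprocation $p\mapsto p^*$ acts as an involution (it permutes the factors because $x^n-\bar\lambda$ is self-reciprocal). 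Writing $\pi(F)$ as the product of the factors in a subset $S$ and $\pi(G)$ as the product of those in the complement, one checks that $\gcd(\pi(F),\pi(G)^*)=1$ holds iff the involution maps $S^c$ into $S^c$, equivalently fixes $S$ setwise, which is exactly $\pi(F)=\pi(F)^*$. Hence $\pi(C)$ is LCD iff $\pi(F)$ is self-reciprocal.

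The last step, and the step I expect to be the main obstacle, is to lift self-reciprocity from $\mathbb{F}_q$ back to $R$, i.e.\ to deduce $F=F^*$ from $\pi(F)=\pi(F)^*$ (the converse being immediate since $\pi$ commutes with reciprocation). The difficulty is that $F^*$ divides $x^n-\lambda^{-1}$ rather than $x^n-\lambda$, so one cannot merely quote uniqueness of monic divisors of a single polynomial. I would resolve this by Hensel lifting: over $R$ the factorizations $x^n-\lambda=FG$ and $x^n-\lambda^{-1}=F^*G^*$ are the unique coprime monic lifts of the corresponding factorizations of $x^n-\bar\lambda=\pi(F)\pi(G)=\pi(F)^*\pi(G)^*$, and the composite ``reduce, reciprocate, re-lift'' is compatible with these lifts; tracking the irreducible factors through this composite shows that the residue-level identity $\pi(F)=\pi(F)^*$ forces the $R$-level identity $F=F^*$. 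Chaining the three equivalences --- $C$ LCD over $R$ iff $\pi(C)$ LCD over $\mathbb{F}_q$, iff $\pi(F)$ is self-reciprocal, iff $F$ is self-reciprocal --- yields the stated result.
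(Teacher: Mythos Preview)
Your Steps 1--3 are sound and in fact go further than the paper, which offers no argument beyond citing Corollary~\ref{sec4cor} and Remark~\ref{rem4}. Reducing modulo $\textgoth{m}$ via the invertibility of $\mathrm{G}\mathrm{G}^\top$, then applying Corollary~\ref{sec4cor} over $\mathbb{F}_q$ (where $\pi(C)$ and $\pi(C)^\perp$ are genuinely $\bar\lambda$-constacyclic for the \emph{same} $\bar\lambda$), and your combinatorial identification of $\gcd(\pi(F),\pi(G)^*)=1$ with $\pi(F)=\pi(F)^*$ in the simple-root setting, are all correct. At that point you have established the true statement: $C$ is LCD if and only if $\pi(F)$ is self-reciprocal.

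Step 4, however, cannot be repaired by the Hensel argument you sketch, and the ``only if'' direction of the corollary is in fact false under the hypothesis $\pi(\lambda^2)=1$ as written. The obstruction is exactly the one you isolate: when $\lambda\neq\lambda^{-1}$ (which $\pi(\lambda^2)=1$ allows), $F$ divides $x^n-\lambda$ while $F^*$ divides $x^n-\lambda^{-1}$, so the two Hensel lifts of $\pi(F)$ live over \emph{different} polynomials and need not coincide. Concretely, take $R=\mathbb{F}_3[\gamma]$ with $\gamma^2=0$, $n=2$, $\lambda=1+\gamma$; then $\pi(\lambda^2)=1$ but $\lambda^{-1}=1-\gamma\neq\lambda$, and $x^2-\lambda=(x-1+\gamma)(x+1-\gamma)$. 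With $F=x-1+\gamma$ one computes $F^*=x-1-\gamma\neq F$, yet the generator matrix $\mathrm{G}=(-1+\gamma,\,1)$ gives $\mathrm{G}\mathrm{G}^\top=2+\gamma$, a unit, so $C=\langle F\rangle$ \emph{is} LCD with $F$ not self-reciprocal. Both your argument and the paper's (which needs $C^\perp$ to be $\lambda$-constacyclic to invoke Corollary~\ref{sec4cor} over $R$) actually require the stronger hypothesis $\lambda^2=1$; under that hypothesis $F$ and $F^*$ divide the same squarefree polynomial $x^n-\lambda$, and your uniqueness-of-lifts step is then valid.
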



In order to calculate all $\lambda$-constacyclic LCD codes, we need to consider the case $\pi{(\lambda)}\neq\pi{(\lambda)}^{-1}$. In this regard, we state the following result from Proposition \ref{$l-intersection$, lm5.1}.

\begin{cor}
Let $C$ be a free $\lambda$-constacyclic code over $R$ of length $n$ such that $\Psi(C)=\langle F\rangle$, where $F$ divides $x^n-\lambda$ and $\pi{(\lambda)}\neq\pi{(\lambda)}^{-1}$. Then $C$ forms an LCD over $R$.
\end{cor}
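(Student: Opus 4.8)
The plan is to run Proposition \ref{$l-intersection$, lm5.1} on the pair $\{C_1,C_2\}=\{C,C^\perp\}$. First I would record the structural input: as noted above, the dual of a free $\lambda$-constacyclic code is a free $\lambda^{-1}$-constacyclic code, so with $\lambda_1:=\lambda$ and $\lambda_2:=\lambda^{-1}$ the hypotheses on freeness and on the constashift parameters are met; moreover $\pi(\lambda)\neq\pi(\lambda)^{-1}=\pi(\lambda^{-1})$ is precisely $\pi(\lambda_1)\neq\pi(\lambda_2)$, which is exactly what makes $\lambda-\lambda^{-1}$ a unit of $R$. Granting the remaining hypothesis of Proposition \ref{$l-intersection$, lm5.1}, namely that the hull $H:=C\cap C^\perp$ is simultaneously $\lambda$- and $\lambda^{-1}$-constacyclic, the proposition forces $H=R^n$ or $H=\{\textbf{0}\}$. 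I would then discard the first option: since $H\subseteq C$, the equality $H=R^n$ would give $C=R^n$ and hence $C^\perp=\{\textbf{0}\}$, in which case $C$ is trivially LCD; otherwise $C\subsetneq R^n$ gives $H\subsetneq R^n$. Either way $C\cap C^\perp=\{\textbf{0}\}$, i.e. $C$ is LCD.

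The hard part is the hypothesis I just granted: that $H$ is both $\lambda$- and $\lambda^{-1}$-constacyclic. This is not automatic once $\lambda\neq\lambda^{-1}$, because a codeword of $H$ is fixed under the $\lambda$-shift only inside $C$ and under the $\lambda^{-1}$-shift only inside $C^\perp$. I would attack it through the standard identity relating the $\lambda$-constashift to the dual pairing; for $\textbf{c}\in C^\perp$ and $\textbf{d}\in C$ it produces an expression whose only obstruction to vanishing is a single term carrying the factor $\lambda-\lambda^{-1}$. Because $\lambda-\lambda^{-1}$ is a unit, this is exactly where the assumption $\pi(\lambda)\neq\pi(\lambda^{-1})$ must be spent, and I expect the verification of the bi-constacyclicity of $H$ (or its direct replacement by a contradiction) to be the main obstacle.

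For that reason I would in fact prefer to close the argument by computing $H$ outright, which also sidesteps the constacyclicity issue. For the free codes here one has $\Psi(C)=\langle F\rangle$ with $F\mid x^n-\lambda$, and $\Psi(C^\perp)=\langle G^{*}\rangle$ with $G:=(x^n-\lambda)/F$ and $G^{*}\mid x^n-\lambda^{-1}$ (the dual generator being the reciprocal $G^{*}$, consistent with Lemma \ref{$l-intersection$, th5.1}); a polynomial of degree $<n$ lies in $C$ (resp. $C^\perp$) exactly when it is a multiple of $F$ (resp. $G^{*}$). Hence $H$ corresponds to the multiples of $\texttt{lcm}(F,G^{*})$ of degree $<n$, just as in Theorem \ref{sec4thm}. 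Now $\gcd(F,G^{*})$ divides $\gcd(x^n-\lambda,\,x^n-\lambda^{-1})$, and the latter divides the difference $\lambda^{-1}-\lambda$, a unit; so $\gcd(F,G^{*})=1$ and $\deg\,\texttt{lcm}(F,G^{*})=\deg F+\deg G=n$. The only multiple of a degree-$n$ polynomial having degree $<n$ is $\textbf{0}$, so $H=\{\textbf{0}\}$ and $C$ is LCD. Under this route the remaining obstacle is bookkeeping rather than conceptual: one must justify, over the chain ring $R$ rather than a field, the reduction-free identification of codewords with polynomial multiples and the coprime-implies-$\texttt{lcm}=$product step for basic monic divisors, which is the same machinery that underlies Theorem \ref{sec4thm}.
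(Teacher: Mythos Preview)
Your first route is exactly the paper's: it records the corollary as ``stated from Proposition \ref{$l-intersection$, lm5.1}'' with $C_1=C$ and $C_2=C^\perp$, relying on the fact that $C^\perp$ is $\lambda^{-1}$-constacyclic and that $\pi(\lambda)\neq\pi(\lambda^{-1})$. The paper offers no further justification.

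You are right to flag the bi-constacyclicity hypothesis of Proposition~\ref{$l-intersection$, lm5.1} as the non-automatic step. In general $C\cap C^\perp$ need not be invariant under the $\lambda$-constashift once $\lambda\neq\lambda^{-1}$: the identity $[\tau_\lambda(\mathbf{x}),\mathbf{c}]=[\mathbf{x},\tau_\lambda^{-1}(\mathbf{c})]+(\lambda-\lambda^{-1})x_{n-1}c_0$ shows the obstruction explicitly. The paper does not address this, so your observation actually fills a gap rather than introduces one.

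Your second route---computing the hull via $\gcd(F,G^{*})\mid\gcd(x^n-\lambda,\,x^n-\lambda^{-1})\mid(\lambda^{-1}-\lambda)\in R^{\times}$, hence $\deg\,\texttt{lcm}(F,G^{*})=n$ and $C\cap C^\perp=\{\mathbf{0}\}$---is a genuinely different, self-contained argument. It avoids Proposition~\ref{$l-intersection$, lm5.1} altogether and bypasses the constacyclicity issue; the only care required is the one you already note, namely that for monic divisors over a chain ring the elements of $\langle F\rangle$ in $R[x]/\langle x^n-\lambda\rangle$ are exactly the polynomial multiples of $F$ of degree $<n$, and that $G^{*}$ is monic (which follows from the paper's normalization $a^{*}(x)=a_0^{-1}x^{\deg a}a(1/x)$). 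One small caveat: your reference to Theorem~\ref{sec4thm} is only analogical, since that theorem treats two codes with the \emph{same} constashift parameter; your argument stands on its own without it.
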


In the sequel, we consider the ring $\mathbb{F}_q[\gamma]$ with $\gamma\neq\gamma^2=0$. Also, assume that $q=p^s$, where $p$ is a prime number, $s$ is a positive integer, and $q\equiv 1 \pmod 4$. Thus, $\mathbb{F}_{q}[\gamma]$ is a finite commutative chain ring and $\texttt{J}(\mathbb{F}_{q}[\gamma])=\gamma \mathbb{F}_{q}$. Note that $\mathbb{F}_{q}[\gamma]$ is a vector space over $\mathbb{F}_q$ with ordered basis $(1, \gamma)$. In \cite[Section 5]{LL20} the Gray map $\phi : \mathbb{F}_{q}[\gamma] \rightarrow \mathbb{F}_q^2$ has been defined as below.
$$
\begin{array}{cccc}
  \phi : & \mathbb{F}_{q}[\gamma] & \rightarrow & \mathbb{F}_q^2 \\
    & a+\gamma b & \mapsto & (\alpha b, a+b),
\end{array}
$$  where $\alpha^2=-1$, for some $\alpha$ in $\mathbb{F}_q[\gamma]$. This mapping is an isomorphism of vector spaces over $\mathbb{F}_q,$ which can be extended to $(\mathbb{F}_{q}[\gamma])^n \rightarrow (\mathbb{F}_q)^{2n}$ in the following way.
$$
\begin{array}{cccc}
  \phi : & (\mathbb{F}_{q}[\gamma])^n & \rightarrow & (\mathbb{F}_q)^{2n} \\
    & (x_1, x_2, \cdots, x_n) & \mapsto & (\phi(x_1), \phi(x_2), \cdots, \phi(x_n)).
\end{array}
$$
The map $\phi :   (\mathbb{F}_{q}[\gamma])^n   \rightarrow
(\mathbb{F}_q)^{2n}$ is also an isomorphism of  vector spaces over $\mathbb{F}_q$. Therefore, for any pair $\{C_1, C_2\}$ of linear codes over $\mathbb{F}_{q}[\gamma]$ of length $n$, we have $$\dim(C_1\cap C_2)=\dim(\phi(C_1)\cap\phi(C_2)).$$ This means that $\{C_1, C_2\}$ is an $\ell$-DLIP if and only if $\{\phi(C_1), \phi(C_2)\}$ is an $\ell$-LIP. Now, we define a Gray weight function over $\mathbb{F}_{q}[\gamma]$ as 
$$\texttt{wt}_G(a+\gamma b)=\left\{
\begin{array}{ll}
    0, & \hbox{if $a=b=0$;} \\
    2, & \hbox{if $b\neq 0$ and $a\neq -b$;} \\
    1, & \hbox{otherwise.}
\end{array}
\right.$$ The Gray weight over $\mathbb{F}_{q}[\gamma]$ can also be extended to $(\mathbb{F}_{q}[\gamma])^n$ as follows:
$$\texttt{wt}_G(x_1, x_2, \cdots, x_n):=\sum\limits_{i=1}^{n}\texttt{wt}_G(x_i),$$ for any $(x_1, x_2, \cdots, x_n)$ in $(\mathbb{F}_{q}[\gamma])^n$. The Gray map $\phi :
(\mathbb{F}_{q}[\gamma])^n\rightarrow(\mathbb{F}_q)^{2n}$ is an $\mathbb{F}_q$-linear map such that $\texttt{wt}_H(\phi(\textbf{x}))=\texttt{wt}_G(\textbf{x}),$ for
any $\textbf{x}$ in $(\mathbb{F}_{q}[\gamma])^n$, where
$\texttt{wt}_H$ is the Hamming weight on $(\mathbb{F}_q)^{2n}$. Therefore, the Gray image of any linear code $C$ over $\mathbb{F}_{q}[\gamma]$ of length $n$, is a linear code $\phi(C)$ over $\mathbb{F}_q$ of length $2n$ with $\dim(C)=\dim(\phi(C))$ and $\texttt{wt}_H(\phi(C))=\texttt{wt}_G(C)$.

Entanglement-assisted quantum error correcting (EAQEC) codes were introduced firstly by Hsieh et al. \cite{Hsich2007}. The authors in \cite{Hsich2007} have framed EAQEC codes from arbitrary classical linear codes. An $\left[\left[n, k, d; c\right]\right]_{q}$ EAQEC code over $\mathbb{F}_q$ encodes $k$ logical qudits into $n$ physical qudits with the help of $c$ copies of maximally entangled states. We recall the proposition \cite[Proposition 4.2]{Guenda2019}, which demonstrates that classical linear codes can be used to create EAQEC codes.

\begin{prop}\label{$l-intersection$, pro7.1}
Let $\{C_1, C_2\}$ be an $\ell$-LIP of codes with parameters
$[n,k_1,d_1]_q$ and $[n,k_2,d_2]_q$, respectively. Then there
exists an $[[n, k_2-\ell, \min\{d_1^\perp,d_2\}; k_1-\ell]]_q$
EAQEC code with $d_1^\perp=d(C_1^\perp)$.
\end{prop}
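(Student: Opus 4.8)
The plan is to obtain the code from the entanglement-assisted stabilizer formalism, by packaging the pair $\{C_1,C_2\}$ into a single classical code in the symplectic space $\mathbb{F}_q^{2n}$. I would invoke the standard Brun--Devetak--Hsieh correspondence in its $\mathbb{F}_q$ form: to any linear code $\mathcal{C}\subseteq\mathbb{F}_q^{2n}$, equipped with the symplectic form $\langle(a\mid b),(a'\mid b')\rangle=a\cdot b'-b\cdot a'$, with $\dim\mathcal{C}=m$ and $\dim(\mathcal{C}\cap\mathcal{C}^{\perp_s})=m-2c$, there corresponds an $[[\,n,\ n-m+c,\ d;\ c\,]]_q$ EAQEC code, where $c$ is the number of hyperbolic pairs of the symplectic form restricted to $\mathcal{C}$ (equivalently, the number of maximally entangled states consumed) and $d$ is the least symplectic weight among the vectors of $\mathcal{C}^{\perp_s}\setminus(\mathcal{C}\cap\mathcal{C}^{\perp_s})$. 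With this as a black box, the proof reduces to choosing $\mathcal{C}$ and verifying the four parameters.

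First I would take $\mathcal{C}:=C_1\times C_2^\perp\subseteq\mathbb{F}_q^{2n}$, so that $m=\dim\mathcal{C}=k_1+(n-k_2)$. Then I would compute the symplectic dual: since $(x\mid y)$ pairs to $a\cdot y-b\cdot x$ against a generic $(a\mid b)\in C_1\times C_2^\perp$, orthogonality forces $a\cdot y=0$ for all $a\in C_1$ and $b\cdot x=0$ for all $b\in C_2^\perp$, i.e. $y\in C_1^\perp$ and $x\in C_2$. Hence $\mathcal{C}^{\perp_s}=C_2\times C_1^\perp$, and consequently $\mathcal{C}\cap\mathcal{C}^{\perp_s}=(C_1\cap C_2)\times(C_1^\perp\cap C_2^\perp)$.

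The parameter count is then forced by the $\ell$-LIP hypothesis. Using $\dim(C_1\cap C_2)=\ell$ together with the field identity $\dim(C_1^\perp\cap C_2^\perp)=n-\dim(C_1+C_2)=n-(k_1+k_2-\ell)$, I obtain $\dim(\mathcal{C}\cap\mathcal{C}^{\perp_s})=n-k_1-k_2+2\ell$. Therefore $2c=m-\dim(\mathcal{C}\cap\mathcal{C}^{\perp_s})=2(k_1-\ell)$, giving $c=k_1-\ell$, and the number of logical qudits is $n-m+c=k_2-\ell$, exactly as claimed.

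For the distance I would analyze $\mathcal{C}^{\perp_s}\setminus(\mathcal{C}\cap\mathcal{C}^{\perp_s})$: a vector $(a\mid b)$ here has $a\in C_2$, $b\in C_1^\perp$, and fails to lie in $(C_1\cap C_2)\times(C_1^\perp\cap C_2^\perp)$, so either $a\in C_2\setminus C_1$ (hence $a\neq0$ and $\mathrm{wt}(a)\ge d_2$) or $b\in C_1^\perp\setminus C_2^\perp$ (hence $b\neq0$ and $\mathrm{wt}(b)\ge d_1^\perp$). As the symplectic weight dominates both $\mathrm{wt}(a)$ and $\mathrm{wt}(b)$, every such vector has symplectic weight at least $\min\{d_1^\perp,d_2\}$, which yields the stated distance. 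The genuinely nontrivial ingredient is the Brun--Devetak--Hsieh correspondence itself, which I would cite as standard; granting it, the only delicate bookkeeping is fixing the symplectic convention so that $\mathcal{C}^{\perp_s}=C_2\times C_1^\perp$ holds on the nose, after which the dimension arithmetic and the weight bound are routine, and the hypothesis $\dim(C_1\cap C_2)=\ell$ feeds directly into both $c=k_1-\ell$ and $k=k_2-\ell$.
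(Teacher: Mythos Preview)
Your argument is correct and is the standard route to this result via the Brun--Devetak--Hsieh entanglement-assisted stabilizer formalism: form $\mathcal{C}=C_1\times C_2^\perp$, compute its symplectic dual and the isotropic part, then read off $c$, $k$, and the distance bound. The dimension bookkeeping and the symplectic-dual identification $\mathcal{C}^{\perp_s}=C_2\times C_1^\perp$ are handled cleanly, and the weight estimate on $\mathcal{C}^{\perp_s}\setminus(\mathcal{C}\cap\mathcal{C}^{\perp_s})$ correctly yields the lower bound $\min\{d_1^\perp,d_2\}$, which is what the ``there exists'' statement needs.

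Note, however, that the paper does \emph{not} supply its own proof of this proposition: it is quoted verbatim as Proposition~4.2 of Guenda, Gulliver, Jitman and Thipworawimon (reference~\cite{Guenda2019} in the paper) and used as a black box. So there is no in-paper proof to compare against; your derivation is essentially the argument one finds in that reference (and in the CSS-type construction literature for EAQEC codes), and it is sound.
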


Using the Gray map, Proposition \ref{$l-intersection$, pro7.1} and
Theorem \ref{sec4thm}, we have the following corollary.

\begin{cor}
Let $C_i$ be a $\lambda$-constacyclic code over
$\mathbb{F}_{q}[\gamma]$ with minimum Gray weight $\texttt{w}_i$
such that
$\Psi(C_i)=\langle\{\widehat{\,F_{i,1}\;},\,\gamma\widehat{\,F_{i,1}\;}\}\rangle,$
where ${F}_{i,0}{F}_{i,1}{F}_{i,2}=x^n-\lambda$ (for
$i\in\{1,2\}$).
 Then  there exists an EAQEC code with parameter $$[[2n,
\tau_2, \min\{\texttt{w}_1^\perp, \texttt{w}_2\}; \tau_1]]_q,$$
where
$\tau_i=2\left(\deg(F_{i,1})+\deg(\texttt{lcm}(\widehat{\,F_{1,0}\;},
\widehat{\,F_{2,0}\;}))\right)+\left(\deg(F_{i,2})+\deg(\texttt{lcm}(\widehat{\,F_{1,1}\;},
\widehat{\,F_{2,1}\;}))\right)-3n,$ and
$\texttt{w}_1^\perp=\texttt{wt}_G(C_1^\perp).$
\end{cor}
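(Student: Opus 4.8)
The plan is to push the whole problem over to $\mathbb{F}_q$ via the Gray map and then invoke Proposition \ref{$l-intersection$, pro7.1}. First I would record that $\phi:(\mathbb{F}_{q}[\gamma])^n\to(\mathbb{F}_q)^{2n}$ is an $\mathbb{F}_q$-linear isomorphism carrying the Gray weight to the Hamming weight. Consequently $\phi(C_1),\phi(C_2)$ are $\mathbb{F}_q$-linear codes of length $2n$ with $\dim(\phi(C_i))=\dim(C_i)$ and minimum Hamming distance exactly $\texttt{w}_i$; moreover $\dim(\phi(C_1)\cap\phi(C_2))=\dim(C_1\cap C_2)$, so the $\ell$-DLIP $\{C_1,C_2\}$ becomes an $\ell$-LIP $\{\phi(C_1),\phi(C_2)\}$ over $\mathbb{F}_q$ with $\ell=\dim(C_1\cap C_2)$.

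Next I would compute the three dimensions explicitly, using $t=2$ (since $\gamma^2=0$). Lemma \ref{$l-intersection$, th5.1} gives $\dim(C_i)=\log_q|C_i|=2\deg(F_{i,1})+\deg(F_{i,2})$, and Theorem \ref{sec4thm} gives
$$\ell=\dim(C_1\cap C_2)=2\bigl(n-\deg(\texttt{lcm}(\widehat{\,F_{1,0}\;},\widehat{\,F_{2,0}\;}))\bigr)+\bigl(n-\deg(\texttt{lcm}(\widehat{\,F_{1,1}\;},\widehat{\,F_{2,1}\;}))\bigr).$$
Setting $k_i:=\dim(\phi(C_i))=\dim(C_i)$ and $d_i:=\texttt{w}_i$, a direct subtraction then shows $k_i-\ell=\tau_i$, which is the arithmetic heart of the statement.

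Finally I would apply Proposition \ref{$l-intersection$, pro7.1} to the $\ell$-LIP $\{\phi(C_1),\phi(C_2)\}$ of codes with parameters $[2n,k_1,\texttt{w}_1]_q$ and $[2n,k_2,\texttt{w}_2]_q$. This produces an $[[2n,k_2-\ell,\min\{d_1^\perp,\texttt{w}_2\};k_1-\ell]]_q$ EAQEC code, where $d_1^\perp=d((\phi(C_1))^\perp)$; substituting $k_i-\ell=\tau_i$ yields the stated parameters $[[2n,\tau_2,\min\{\texttt{w}_1^\perp,\texttt{w}_2\};\tau_1]]_q$.

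The step needing the most care is identifying $d_1^\perp=d((\phi(C_1))^\perp)$ with $\texttt{w}_1^\perp=\texttt{wt}_G(C_1^\perp)$, i.e. verifying that $\phi$ respects duality so that $(\phi(C_1))^\perp=\phi(C_1^\perp)$. Both modules already share the dimension $2n-\dim(C_1)$ (by Eq.\,\ref{eq*} with $\omega=2$ together with the dimension-preservation of $\phi$), so the remaining work is to check the inclusion $\phi(C_1^\perp)\subseteq(\phi(C_1))^\perp$; this is exactly where the normalization $\alpha^2=-1$ in the definition of the Gray map enters, as it is what makes $\phi$ orthogonality-preserving.
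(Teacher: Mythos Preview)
Your proposal is correct and follows exactly the route the paper intends: the paper's own ``proof'' is the single sentence ``Using the Gray map, Proposition~\ref{$l-intersection$, pro7.1} and Theorem~\ref{sec4thm}, we have the following corollary,'' and your outline simply unpacks these three ingredients and checks the arithmetic $k_i-\ell=\tau_i$. Your extra paragraph on the identification $(\phi(C_1))^\perp=\phi(C_1^\perp)$ via the condition $\alpha^2=-1$ is a genuine detail the paper leaves implicit, so you are being more careful than the source rather than diverging from it.
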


\textbf{Acknowledgements}\; The authors would like to thank the anonymous referees for their careful reading, insightful comments, and suggestions which help us to improve our manuscript drastically.


\bibliographystyle{elsarticle-num}

\begin{thebibliography}{00}

\bibitem{Bhasin2015} S. Bhasin, J. L. Danger, S. Guilley, Z. Najm, X. T. Ngo: \emph{Linear complementary dual code improvement to strengthen encoded circuit against hardware Trojan horses}, IEEE International Symposium on Hardware Oriented Security and Trust (HOST), May \textbf{57}, (2015).

\bibitem{Bhowmick} S. Bhowmick, A. F. Tabue, E. Martinez-Moro, R. Bandi, S. Bagchi: \emph{Do non-free LCD codes over finite commutative Frobenius rings exist?} Designs, Codes and Cryptography, \textbf{88}, 825--840, (2020).

\bibitem{Bringer2014} J. Bringer, C. Carlet, H. Chabanne, S. Guilley, H. Maghrebi: \emph{Orthogonal direct sum masking - a smartcard friendly computation paradigm in a code with builtin protection against side-channel and fault attacks}, In: WISTP, Springer, Heraklion, 40--56, (2014).

\bibitem{Carlet2018} C. Carlet, C. Guneri, F. Ozbudak, B. Ozkaya, P. Sole: \emph{On linear complementary pairs of codes}, IEEE Transactions on Information Theory, \textbf{64}, 6583--6589, (2018).

\bibitem{DL09} S. T. Dougherty, H. Liu: \emph{Independence of vectors in codes over rings}, Designs, Codes and Cryptography, \textbf{51}, 55--68, (2009).

\bibitem{Hsich2007} M. H. Hsieh, I. Devetak, T. Brun: \emph{General entanglement-assisted quantum error-correcting codes}, Physical Review A, 76, 062313, (2007).

\bibitem{Guenda2019}  K. Guenda, T. A. Gulliver, S. Jitman, S. Thipworawimon: \emph{Linear $l$-Intersection Pairs of Codes and Their Applications}, Designs, Codes and Cryptography, \textbf{88}(1), 133--152, (2020).

\bibitem{GG12} K. Guenda, T. A. Gulliver: \emph{MDS and self-dual codes over rings}, Finite Fields and Their Applications,  \textbf{18}, 1061--1075 (2012).

\bibitem{LL20} X. S. Liu, H. Liu:\emph{ $\sigma$-LCD codes over finite chain rings}, Designs, Codes and Cryptography,  \textbf{88}, 727--746, (2020).

\bibitem{Liu2021} P. Hu and X. Liu: \emph{Linear complementary pairs of codes over rings}, Designs, Codes and Cryptography,  \textbf{89}, 2495--2509, (2021).

\bibitem{LH20} X. S. Liu, P. Hu: \emph{$l$-LIPs of codes over finite chai  rings}, Discrete Mathematics, \textbf{345}, 113087, (2022).

\bibitem{Kap58} I. Kaplansky: \emph{Projective modules, Annals of Mathematics}, \textbf{68}(2), 372--377, (1958).

\bibitem{Liu2015} X. Liu and H. Liu: \emph{LCD codes over a finite chain ring}, Finite Fields and Their Applications,  \textbf{34}, 1--19, (2015).

\bibitem{Massey1992} J. L. Massey: \emph{Linear codes with complementary duals}, Discrete Mathematics, 106/107, 337--342, (1992).

\bibitem{McD74} B. R. McDonald: \emph{Finite Rings with Identity}. Marcel Dekker, New York, (1974).

\bibitem{Nec08} A. A. Nechaev : \emph{Finite rings with applications}, Handbook of Algebra, \textbf{5}, 213--320, (2008).

\bibitem{Tal21}  S. Talbi, A. Batoul, A. F. Tabue, E. Martinez-Moro: \emph{Hulls of cyclic serial codes over a finite chain ring}, Finite Fields and Their Applications, \textbf{77}, (2022).

\bibitem{Woo99} J. Wood: \emph{Duality for modules over finite rings and applications to coding theory}, American Journal of Mathematics, \textbf{121}, 555--575, (1999).

\end{thebibliography}

\end{document}